\newtheorem{theorem}{Theorem}[section]
\newtheorem{definition}{Definition}
\newtheorem{lemma}[theorem]{Lemma}
\newtheorem{corollary}[theorem]{Corollary}
\newcommand{\qed}{\mbox{\ \ \ }\rule{6pt}{7pt} \bigskip}
\renewcommand{\comment}[1]{}
\newenvironment{proof}{\noindent{\em Proof:}}{\hfill\qed}
\newenvironment{proofof}[1]{\noindent{\em Proof of #1.}}{\qed}
\newcommand{\A}{{\mathcal A}}
\newcommand{\OPT}{{\mbox {OPT}}}
\newcommand{\iid}{\text{i.i.d.}}
\newcommand{\wrt}{\text{w.r.t.}}
\newcommand{\val}{v}
\newcommand{\vals}{{\mathbf \val}}
\newcommand{\valsmi}{{\mathbf \val}_{-i}}
\newcommand{\vali}[1][i]{{\val_{#1}}}
\newcommand{\valith}[1][i]{{\val_{(#1)}}}
\newcommand{\eff}{e}
\newcommand{\effi}[1][i]{{\eff_{#1}}}
\newcommand{\bid}{b}
\newcommand{\virt}{\phi}
\newcommand{\ivirt}{\bar{\virt}}
\newcommand{\dist}{F}
\newcommand{\dens}{f}
\newcommand{\densi}[1][i]{{\dens_{#1}}}
\newcommand{\price}{p}
\newcommand{\prices}{{\mathbf \price}}
\newcommand{\pricei}[1][i]{{\price_{#1}}}
\newcommand{\alloc}{x}
\newcommand{\allocs}{{\mathbf \alloc}}
\newcommand{\alloci}[1][i]{{\alloc_{#1}}}
\newcommand{\util}{u}
\newcommand{\utili}[1][i]{{\util_{#1}}}
\newcommand{\prob}[2][]{\text{\bf Pr}\ifthenelse{\not\equal{}{#1}}{_{#1}}{}\!\left[#2\right]}
\newcommand{\expect}[2][]{\text{\bf E}\ifthenelse{\not\equal{}{#1}}{_{#1}}{}\!\left[#2\right]}
\newcommand{\given}{\,\mid\,}
\newcommand{\Rev}[1]{{\text{\bf Rev}}[#1]}
\newcommand{\MP}[1]{{\text{\bf MP}}[#1]}
\newcommand{\MPi}[1]{{\text{\bf MP}_i}[#1]}
\newcommand{\virta}{\psi}
\newcommand{\ivirta}{\overline{\virta}}
\newcommand{\stat}{a}
\begin{document}
\title{Optimal Crowdsourcing Contests}

\author{Shuchi Chawla\thanks{Computer Sciences Dept., University of Wisconsin -
  Madison. Supported in part by NSF award 
  CCF-0830494 and in part by a Sloan Foundation fellowship.\tt{ shuchi@cs.wisc.edu}.}
\and Jason D. Hartline\thanks{Electrical Engineering and Computer Science, Northwestern
   University. Supported in part 
   by NSF award CCF-0830773.\tt{ hartline@eecs.northwestern.edu}.}
 \and Balasubramanian Sivan\thanks{Computer Sciences Dept., University
 of Wisconsin - Madison.  Supported in part by NSF award 
  CCF-0830494. \tt{ balu2901@cs.wisc.edu}.}
}
\date{}
\maketitle{}

\thispagestyle{empty}

\begin{abstract}
We study the design and approximation of optimal crowdsourcing
contests.  Crowdsourcing contests can be modeled as all-pay auctions
because entrants must exert effort up-front to enter.  Unlike all-pay
auctions where a usual design objective would be to maximize revenue,
in crowdsourcing contests, the principal only benefits from the
submission with the highest quality.  We give a theory for optimal
crowdsourcing contests that mirrors the theory of optimal auction
design: the optimal crowdsourcing contest is a virtual valuation
optimizer (the virtual valuation function depends on the distribution
of contestant skills and the number of contestants).  We also compare
crowdsourcing contests with more conventional means of procurement.
In this comparison, crowdsourcing contests are relatively
disadvantaged because the effort of losing contestants is wasted.
Nonetheless, we show that crowdsourcing contests are 2-approximations
to conventional methods for a large family of ``regular''
distributions, and $4$-approximations, otherwise.

\end{abstract}

\newpage
\setcounter{page}{1}

\section{Introduction}
Crowdsourcing contests have become increasingly important and
prevalent with the ubiquity of the Internet.  For instance, instead of
hiring a research team to develop a better collaborative filtering
algorithm, Netflix issued the ``Netflix challenge'' offering a million
dollars to the team that develops an algorithm that beats the Neflix
algorithm by 10\%.  More generally, Taskcn allows users to post tasks
with monetary rewards, collects submissions by other users, and
rewards the best submission; and many Q\&A sites allow users to post
questions and reward the best answer with much-coveted ``points''.  We
address two questions in this paper, (a) what format of crowdsourcing
competition induces the highest-quality winning contribution, and (b)
how inefficient is crowdsourcing over more conventional means of
contracting.

Crowdsourcing competitions can be modeled as {\em all-pay} auctions.
In the highest-bid-wins single-item all-pay auction, the auctioneer
solicits payments (as bids), awards the item to the agent with the
highest payment, and keeps all the agent payments.  These auctions are
well understood in settings where each agent has an independent
private value for obtaining the item.  In the connection to
crowdsourcing contests, the ``item'' is the monetary reward, the
payments are the submissions, and the private value is the rate at
which the contestant works.  However, unlike all-pay auctions, in
crowdsourcing competitions the principal usually only values the
winning submission and has no value for lesser submissions.
Therefore, while the performance metric for auctions is usually {\em
revenue} which is the sum of the agent payments, in crowdsourcing
contests where payments are submissions, the relevant performance
metric is the quality of the best submission, i.e., the maximum agent
``payment''.

The {\em revenue equivalence} principle implies that in equilibrium
the revenue of the highest-bid-wins all-pay auction is the same as
that of first- and second-price auction formats; however, in these
latter auction formats only the winner makes a payment.  Since
non-winners make payments in all-pay auctions, the maximum agent
payment in all-pay auctions is lower than that of first- and
second-price auctions.  To connect this auction theory back to the
setting of procurement, first- and second-price auctions are analogous
to conventional procurement mechanisms, e.g., for government
contracts, whereas the all-pay format is analogous to crowdsourcing
contests.  Importantly, the performance metric for first- and
second-price procurement auctions is their revenue, that is, the
winner's payment. While the all-pay auction obtains the same total
revenue, the principal in crowdsourcing cannot attain this full
revenue and therefore suffers a loss relative to conventional methods.

Our first result is to show that in expectation the maximum agent
payment in highest-bid-wins all-pay auctions is at least half its
total revenue.  Consequently crowdsourcing contests can extract from
the best submission at least half of the total contribution from the
crowd, which in turn implies that they are $2$-approximate with
respect to conventional procurement via highest-bid-wins auctions.
Of course, highest-bid-wins auctions are not necessarily revenue
optimal. However, for a large class of distributions (termed {\em
regular}), auctions that award the item to the highest bidder that
meets a reservation price are optimal \cite{mye-81}.  In these
settings crowdsourcing contests that require submissions to be of a
minimum quality (e.g., the Netflix challenge required submissions to
beat the Netflix algorithm by 10\%) are $2$-approximations.  For more
general distributional settings reserve pricing gives a
$2$-approximation to the optimal auction revenue~\cite{CHMS-10} and
crowdsourcing contests with minimum quality conditions are, therefore,
$4$-approximations to conventional procurement.
This approximation 
also implies a ``simple versus optimal'' style result, i.e., that the gains
from precisely optimizing a contest based on the distribution versus running a
simple highest-bid-wins crowdsourcing contest with a minimum quality
condition are at most a factor of $4$.

Our second result derives the optimal static crowdsourcing format that
maximizes the quality of the best submission. Specifically, suppose we
fix the reward for the $k$-th best submission to be $\stat_k$ (where
we normalize the $\stat_k$'s to $\sum_{k=1}^{n}\stat_k = 1$). What
should the $\stat_k$'s be? For instance on computer programming
crowdsourcing site TopCoder.com, the best submission receives 2/3rds
and the second-best submission receives 1/3rd of the total reward.  Is
this a better format than awarding the entire amount to the best
submission in terms of the quality of that submission?  We prove that
it is not: $\stat_1 = 1$ and $\stat_k = 0$ for $k > 1$, or
winner-takes-all, is the optimal choice over all such static contests.

Of course in some settings it may be better to adjust the number of
rewards and their distribution across the participants dynamically as
functions of the observed submission qualities. Our third result
derives the format of crowdsourcing contests that dynamically
optimizes the quality of the best submission. We give a complete
characterization of optimal crowdsourcing contests. In what would be a
familiar result to auction theorists, optimal crowdsourcing
contests are ``ironed virtual value optimizers'' in that the reward is
divided evenly among all contestants whose submissions are tied under
a weakly monotone transformation (via the ironed virtual value
function) of the submission quality.  Importantly, the number of
contestants who share the reward is determined dynamically and each
contestant's share is the same.  Perhaps surprisingly, and unlike the
case of classical auction theory, the transformation to ironed virtual
values depends on the number of contestants.

Optimal crowdsourcing contests require the auctioneer to know the
distribution of agents' skills, e.g. in order to pick an appropriate
minimum submission quality. In our fourth result we consider the loss
from not knowing the distribution. For the revenue objective, Bulow
and Klemperer~\cite{BK-96} proved that for regular distributions
recruiting an extra bidder is more profitable to the auctioneer than
knowing the distribution. We show that this result implies that a
simple highest-bid-wins contest approximates the optimal contest
within a factor approaching $2$; this limits the benefit of knowing
the skill distribution.

\paragraph{Related Work.}

This paper follows the connection made between crowdsourcing contests
and all-pay auctions from DiPalantino and Vojnovic \cite{dPV09} and
questions from Archak and Sundararajan \cite{AS09} and Moldovanu and
Sela~\cite{MS01,MS06} on optimizing the reward structure to improve
the quality of the best submissions. \cite{AS09} and \cite{MS01}
compare winner-take-all crowdsourcing contests against ones with a
statically determined division of the reward among top agents, e.g.,
as in the TopCoder.com mechanism. The objective in \cite{MS01} is the
sum of the qualities of submissions (analogous to revenue in our
discussion) and the Archak-Sundararajan objective is the cumulative
effort from the top $k$ agents less the monetary reward. Both papers
show that when agents' submission qualities are linear in their effort
winner-take-all is optimal over other static divisions. The
Moldovanu-Sela result also holds when quality is a convex function of
effort, but not generally for concave functions.  Minor~\cite{M11}
studies a generalization of the problem of Moldovanu and Sela, and
derives the optimal crowdsourcing contest via a Myersonian ironing
approach (again, to maximize the sum of qualities).

Moldovanu and Sela~\cite{MS06} study both the highest quality
submission objective and the revenue (sum of all submission qualities)
objective with the total reward normalized to 1. 
They compare the performance of two-stage contests against
one-stage contests. Among one-stage contests they consider both a
single grand contest, as well as many sub-contests in parallel, with
the winner of each sub-contest receiving a prize. In two-stage
contests, the winners of the first stage sub-contests compete in a
final round. These are all static contests in the sense that the
division of reward among winners of different sub-contests is
predetermined. For the sum-of-qualities objective, \cite{MS06} prove
that a single grand contest is best among these contest formats. For
the highest quality objective, if there are sufficiently many
competitors then it is optimal to split the competitors in two
divisions and to have a final among the two divisional winners.
Further, as the number of competitors tends to infinity, ~\cite{MS06}
show that the optimal highest quality objective is at least half of
the optimal sum-of-qualities objective --- we generalize this result
and show that the factor of two ratio holds for any number of
competitors.

In this paper our goal is to optimize the quality of the best
submission (unlike~\cite{MS01,M11} which consider total quality) with
a total reward normalized to one (unlike \cite{AS09} which optimizes
the quality of the best submissions less the monetary reward). We
study optimal crowdsourcing contests over all single-stage all-pay
formats, unlike \cite{MS06} which limits the format of one-stage
contests but also studies two-stage contests. Our main results are to
show that the wasted effort is not large and to characterize optimal
crowdsourcing contests that can potentially divide the award between
agents dynamically depending on the qualities of submissions. We also
show that our model is consistent with that of \cite{MS01,AS09} in
that the optimal static allocation of the award is
winner-take-all\footnote{Moldovanu and Sela in \cite{MS06} also state
that this should be true but do not provide a reference or a proof.}.

The following other results relating to crowdsourcing contests are
technically unrelated to ours.  DiPalantino and Vojnovic~\cite{dPV09}
study crowdsourcing websites as a matching market.  They discuss
equilibria where contestants first choose which contest to participate
in and then their level of effort.  Yang et al.~\cite{YAA08} and
DiPalantino and Vojnovic~\cite{dPV09} empirically study bidder behavior
from crowdsourcing website Taskcn and conclude that experienced
contestants strategize better than others and their strategizes match
the BNE predictions fairly well.  

There have been a number of studies of all-pay auctions in complete
information settings (e.g., Baye et al.~\cite{BKdV96}), but these works
are also technically unrelated to ours.

\section{Preliminaries}

%
%

\paragraph{Auction Theory.}
Consider the standard auction-theoretic problem of selling a single
item to $n$ agents.  Each agent $i$ has a private value $\vali$ for
receiving the object and is risk-neutral with linear utility $\utili =
\vali \alloci - \pricei$ for receiving the item with probability
$\alloci$ and making payment $\pricei$. An auction $\A$
solicits bids and determines the outcome which consists of an
allocation $\allocs = (\alloci[1],\ldots,\alloci[n])$ and payments
$\prices = (\pricei[1],\ldots,\pricei[n])$.

%
%
Suppose that the agents' values are drawn $\iid$ from continuous
distribution $\dist$ (that is, having no point-masses) with
distribution function $\dist(z) = \prob{\vali \leq z}$ and density
function $\dens(z)$. A Bayes-Nash equilibrium (BNE) in auction $\A$ is
a profile of strategies for mapping values to bids in the auction that
are a mutual best response, i.e., when the values are drawn from
$\dist$ and other agents follow their equilibrium strategies then each
agent (weakly) prefers to also follow the prescribed strategy over
taking any other action. 

%
%
Formally, on valuation profile $\vals = (\vali[1], \ldots,\vali[n])$,
denote the composition of an auction and a strategy profile by
allocation rule $\allocs(\vals)$ and payment rule
$\prices(\vals)$. When agent $i$ is bidding in the auction, she knows
her own value $\vali$ and assumes that the other agent values are
drawn from the distribution $\dist$.  Denote her {\em interim}
allocation and payment rules as $\alloci(\vali) =
\expect[\vals]{\alloci(\vals) \given \vali}$ and $\pricei(\vali) =
\expect[\vals]{\pricei(\vals) \given \vali}$, respectively.
Bayes-Nash equilibrium requires that $\vali \alloci(\vali) -
\pricei(\vali) \geq \vali\alloci(z) - \pricei(z)$ for all $z$ and from
this constraint is derived the standard characterization of BNE:
\begin{theorem} \cite{mye-81}
\label{t:BNE-char}
Allocation and payment rules $\allocs(\cdot)$ and $\prices(\cdot)$ are
in BNE if and only if for all $i$
\begin{enumerate}
\item $\alloci(\vali)$ is monotone non-decreasing in $\vali$ and
\item $\pricei(\vali) = \vali \alloci(\vali) - \int_0^{\vali} \alloci(z)dz + \pricei(0)$
\end{enumerate}
where usually $\pricei(0) = 0$.
\end{theorem}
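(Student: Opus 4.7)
The plan is to prove the two directions (necessity and sufficiency) separately, working directly from the BNE deviation inequality stated just above the theorem, namely $\vali \alloci(\vali) - \pricei(\vali) \geq \vali \alloci(z) - \pricei(z)$ for all $z$. Let $\utili(\vali) := \vali \alloci(\vali) - \pricei(\vali)$ denote the interim utility of agent $i$ under truthful play.

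For the necessity direction, I first write the deviation constraint in both orderings: misreporting from $\vali$ to $z$ gives $\utili(\vali) \geq \utili(z) + (\vali - z)\alloci(z)$, and misreporting from $z$ to $\vali$ gives $\utili(z) \geq \utili(\vali) + (z - \vali)\alloci(\vali)$. Adding these two inequalities for $\vali > z$ immediately yields $(\vali - z)(\alloci(\vali) - \alloci(z)) \geq 0$, establishing monotonicity (condition~1). To obtain condition~2, I divide the same sandwich inequality $(\vali - z)\alloci(z) \leq \utili(\vali) - \utili(z) \leq (\vali - z)\alloci(\vali)$ by $\vali - z$; since $\alloci$ is monotone it is Riemann integrable, and standard envelope reasoning (or integrating the left and right Dini derivatives) gives $\utili(\vali) = \utili(0) + \int_0^{\vali} \alloci(z)\,dz$. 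Substituting $\utili(\vali) = \vali \alloci(\vali) - \pricei(\vali)$ and $\utili(0) = -\pricei(0)$ and rearranging produces the stated payment formula.

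For the sufficiency direction, I assume monotonicity of $\alloci$ and the payment formula and check that no deviation strictly improves utility. The utility from reporting $z$ when the true value is $\vali$ is
\[
\vali\alloci(z) - \pricei(z) \;=\; \vali\alloci(z) - z\alloci(z) + \int_0^z \alloci(y)\,dy - \pricei(0),
\]
while truthful reporting yields $\int_0^{\vali}\alloci(y)\,dy - \pricei(0)$. The difference (truth minus deviation) equals $\int_z^{\vali}\alloci(y)\,dy - (\vali - z)\alloci(z)$, which is nonnegative for $\vali \geq z$ because $\alloci(y) \geq \alloci(z)$ for $y \in [z,\vali]$ by monotonicity, and analogously nonnegative for $\vali < z$ since then $\alloci(y) \leq \alloci(z)$ on $[\vali,z]$ and the sign of the length factor flips. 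This certifies BNE.

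I do not expect a substantial obstacle: the result is classical and each step is a short manipulation. The only mildly delicate point is justifying the passage from the discrete sandwich inequality to the integral identity for $\utili$ without assuming differentiability of $\alloci$; I handle this by invoking that a bounded monotone function is Riemann integrable and that $\utili$ inherits absolute continuity from being squeezed between two integrals of $\alloci$, so the fundamental theorem of calculus applies almost everywhere and the constant of integration is pinned down by $\utili(0) = -\pricei(0)$.
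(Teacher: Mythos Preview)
Your argument is correct and is precisely the standard Myersonian proof: the two-sided incentive constraint yields monotonicity and the envelope sandwich, from which the integral payment identity follows, and the converse is the elementary comparison of $\int_z^{\vali}\alloci(y)\,dy$ with $(\vali-z)\alloci(z)$ using monotonicity. There is nothing to critique technically.

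Note, however, that the paper does not actually supply its own proof of this theorem; it is stated in the preliminaries with a citation to \cite{mye-81} and used as a black box. So there is no paper proof to compare against---your write-up is simply the classical argument that the citation points to.
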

A simple consequence of this characterization is the {\em revenue
  equivalence} principle which states that two mechanisms with the
  same equilibrium allocation have the same equilibrium revenue---in
  fact each agent's expected interim payment is the same.

%
%
There are three standard formats for highest-bid-wins single-item
auctions: first-price, second-price, and all-pay.  In the first-price
variant the highest bidder wins and pays her bid, in the second-price
variant (a.k.a.~the Vickrey auction) the highest bidder wins and pays the
second highest bid, and in the all-pay variant the highest bidder wins
and all bidders pay their bids.  These auction formats all have BNE in
which the agent with the highest valuation wins;
Therefore, revenue equivalence implies that they have the same
expected revenue (sum of payments) in equilibrium.

%
%
The highest-bid-win auction formats do not always yield the highest
expected revenue.  To solve for optimal auctions,
Myerson~\cite{mye-81} defined {\em virtual valuations for revenue} as
$\virt(\vali) = \vali - \frac{1-\dist(\vali)}{\dens(\vali)}$ and
proved that the expected payment of an agent,
$\expect[\vali]{\pricei(\vali)}$ is equal to her expected virtual
value $\expect[\vali]{\virt(\vali)\alloci(\vali)}$.  The distribution
$\dist$ is said to be {\em regular} if the virtual valuation function
is monotone.  For regular distributions, maximizing virtual values
point-wise is a monotone allocation rule, and therefore can be
implemented in BNE.  The corresponding revenue-optimal auction serves
the agent with the highest positive virtual value.  By symmetry, this
agent is identically the agent with the highest value that meets a
reserve price of $\virt^{-1}(0)$.
\begin{theorem} 
\label{t:regular-reserve-optimal}
\cite{mye-81} When the virtual valuation function $\virt(\cdot)$ is
monotone, the optimal auction format is highest-bid-wins with a
reservation value of $\virt^{-1}(0)$, a.k.a., the {\em monopoly price}.
\end{theorem}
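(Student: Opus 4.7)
The plan is to reduce the search for an optimal auction to pointwise maximization of virtual surplus, and then to observe that regularity makes the pointwise optimizer monotone, hence implementable as a BNE. The starting point is Myerson's identity stated in the paragraph preceding the theorem: for any BNE with $\pricei(0)=0$, $\expect[\vali]{\pricei(\vali)} = \expect[\vali]{\virt(\vali)\alloci(\vali)}$. This identity itself follows from Theorem~\ref{t:BNE-char} by substituting the payment formula into $\expect[\vali]{\pricei(\vali)} = \int \pricei(\vali)\dens(\vali)\,d\vali$ and swapping the order of integration. Summing over $i$ turns the expected revenue of auction $\A$ into the expected virtual surplus $\expect[\vals]{\sum_i \virt(\vali)\alloci(\vals)}$, so the design problem becomes: maximize expected virtual surplus over feasible allocation rules whose interim rule is monotone non-decreasing.

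First I would drop the monotonicity constraint and maximize pointwise. For each profile $\vals$, the feasibility constraints $\alloci(\vals)\geq 0$ and $\sum_i \alloci(\vals) \leq 1$ imply that $\sum_i \virt(\vali)\alloci(\vals) \leq \max\!\bigl(0,\,\max_i \virt(\vali)\bigr)$, achieved by allocating the item to any agent attaining the maximum, provided that maximum is positive, and to no one otherwise. Because the valuations are i.i.d.\ and, by regularity, $\virt$ is monotone non-decreasing, the agent with the largest virtual value is the agent with the largest value; and the condition $\virt(\vali)>0$ coincides with $\vali > \virt^{-1}(0)$. So the pointwise maximizer is exactly the highest-bid-wins allocation with reservation value $\virt^{-1}(0)$.

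The final step is to verify that this pointwise-optimal allocation rule is actually realizable as a BNE, so that the pointwise upper bound is in fact attained. Here regularity does the work: raising $\vali$ (holding the other values fixed) can only move agent $i$ from losing to winning, never the reverse, so $\alloci(\vali)$ is monotone non-decreasing. Theorem~\ref{t:BNE-char} then supplies payments that implement this allocation in BNE, and we may set $\pricei(0)=0$. The main obstacle in a general (non-regular) setting would be that the pointwise virtual-surplus maximizer need not be monotone in $\vali$, requiring Myerson's ironing construction; the regularity hypothesis of the theorem removes this complication and the argument closes cleanly.
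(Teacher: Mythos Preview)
The paper does not actually prove this theorem; it is quoted from Myerson~\cite{mye-81} and used as background. Your argument is correct and is precisely the standard Myerson derivation: express expected revenue as expected virtual surplus via Theorem~\ref{t:BNE-char}, maximize virtual surplus pointwise subject to the single-item feasibility constraint, observe that regularity (monotone $\virt$) together with the $\iid$ assumption makes the pointwise optimizer coincide with ``highest value above $\virt^{-1}(0)$ wins,'' and finally note that this allocation is monotone in each $\vali$ and hence implementable in BNE by Theorem~\ref{t:BNE-char}. There is nothing to compare against in the paper beyond the one-paragraph sketch preceding the theorem statement, and your proposal faithfully expands that sketch.
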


%
%
It will be useful to be able to solve for the equilibrium strategies
in all-pay auctions with reserves.  Revenue equivalence makes this
easy: the expected payment of an agent with value $\vali$ is the same
in both the all-pay and the second-price auction formats.  Of course in
the all-pay format the agent always pays her bid; therefore, her bid
$\bid(\vali)$ must be equal to her expected payment in the
second-price auction.  Let $\valith[j]$ denote the $j$th largest
value.  Agent $i$'s expected payment in the second-price auction when $\vali \geq r$,
is
exactly $\expect[\valsmi]{\max(\valith[2],r) \given \vali
= \valith[1]} \prob[\valsmi]{\vali = \valith[1]}$, so her bid in the
all-pay auction must be equal to this expectation.  

\begin{lemma} \label{l:all-pay-bids}
In a highest-bid-wins all-pay auction with value reserve $r$ an agent with
value $\vali$ bids
$$\bid(\vali)
= \expect[\valsmi]{\max(\valith[2],r) \given \vali = \valith[1]} \prob[\valsmi]{\vali
= \valith[1]}.$$
if $\vali\ge r$ and $0$ otherwise.
\end{lemma}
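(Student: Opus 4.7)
The plan is to deduce the bid function directly from revenue equivalence (Theorem~\ref{t:BNE-char}) by pairing the all-pay auction with reserve $r$ against the second-price auction with the same reserve. Both formats admit a symmetric BNE in which the highest-value agent wins provided her value meets the reserve, so they realize the same interim allocation rule $\alloci(\vali) = \prob[\valsmi]{\vali = \valith[1]} \cdot \mathbf{1}[\vali \geq r]$, which is monotone non-decreasing in $\vali$. Combined with the natural boundary condition $\pricei(r) = 0$ (an agent at value exactly $r$ can guarantee zero utility by bidding $0$ and must earn nonnegative expected utility in equilibrium), the integral characterization of payments in Theorem~\ref{t:BNE-char} forces the two auctions to have identical interim expected payments $\pricei(\vali)$ for every $\vali$.

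Next I would compute the interim expected payment in the second-price format explicitly. Conditional on winning (i.e., $\vali = \valith[1]$ and $\vali \geq r$), the winner pays $\max(\valith[2], r)$; conditional on losing she pays nothing. Hence
\[
\pricei^{\text{SP}}(\vali) \;=\; \expect[\valsmi]{\max(\valith[2], r) \given \vali = \valith[1]}\,\prob[\valsmi]{\vali = \valith[1]}
\]
for $\vali \geq r$, and $0$ otherwise.

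Finally, I would use the fact that in the all-pay auction the bid is paid with probability one, so the interim expected payment equals the bid itself: $\pricei^{\text{AP}}(\vali) = \bid(\vali)$. Equating the two payment expressions from revenue equivalence yields the claimed formula. For $\vali < r$ the agent cannot win (the reserve excludes her) and a positive bid would give strictly negative utility, so $\bid(\vali) = 0$ there, completing the lemma.

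I do not expect any serious obstacle: the only subtlety is justifying the boundary condition $\pricei(0) = 0$ in Theorem~\ref{t:BNE-char} gets replaced here by $\pricei(r) = 0$ (since the BNE is effectively over the restricted type space $[r,\infty)$), and checking that the all-pay equilibrium is indeed symmetric and monotone so that the characterization applies. Both are standard and follow from the fact that bidding above one's value is dominated and bidding $0$ is always available.
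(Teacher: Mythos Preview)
Your proposal is correct and follows essentially the same argument as the paper: invoke revenue equivalence between the all-pay and second-price formats with reserve $r$, observe that in the all-pay format the interim expected payment equals the bid itself, and read off the bid from the second-price expected payment $\expect[\valsmi]{\max(\valith[2],r)\given \vali=\valith[1]}\prob[\valsmi]{\vali=\valith[1]}$. Your added remarks about the boundary condition and monotonicity are fine elaborations but not a departure from the paper's route.
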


%
%
The reserve specified above is in value-space.  To implement such a
reserve in an auction, one must translate it to a reserve in
bid-space.  For first- and second-price auctions this transformation
is the identity function.  For all-pay auctions, it can be calculated
as follows.  An agent with value equal to the reserve $r$ in the
second price auction pays the reserve if she wins, i.e., her expected
payment is $r \prob[\valsmi]{r = \valith[1]} = r \dist(r)^{n-1}$.  By revenue
equivalence the same agent in the equivalent all-pay auction must bid
this expected payment; as this bid is the minimum bid that should be
accepted, it is the reserve.
\begin{lemma}
\label{l:all-pay-reserves}
The highest-bid-win all-pay auction with reserve bid
$r \dist(r)^{n-1}$ implements the highest-value-wins allocation rule
with a reserve value of $r$.
\end{lemma}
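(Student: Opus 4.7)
The plan is to verify two things: (i) that the bid function from Lemma~\ref{l:all-pay-bids} is monotone non-decreasing in $\vali$ on $[r,\infty)$ and strictly exceeds the claimed reserve bid for $\vali>r$, and (ii) that the bid of an agent with value exactly $r$ equals $r\dist(r)^{n-1}$, so that the reserve bid precisely cuts off participants with values below $r$. Together these give that the agent with the highest value strictly above $r$ wins, and no agent with value below $r$ submits an acceptable bid, which is exactly the highest-value-wins rule with reserve value $r$.

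For step~(ii), evaluate the expression from Lemma~\ref{l:all-pay-bids} at $\vali = r$. Conditioned on $\vali = \valith[1] = r$, the other $n-1$ values are at most $r$, so $\valith[2] \le r$ and hence $\max(\valith[2],r) = r$ deterministically. Thus $\bid(r) = r\cdot\prob[\valsmi]{\vali=\valith[1]\given\vali=r} = r\,\dist(r)^{n-1}$, matching the claimed reserve bid exactly.

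For step~(i), monotonicity of $\bid(\cdot)$ on $[r,\infty)$ follows because, by revenue equivalence, $\bid(\vali)$ equals the interim expected payment of value $\vali$ in the second-price auction with reserve $r$, which is monotone non-decreasing by the characterization in Theorem~\ref{t:BNE-char} applied to the (monotone) highest-value-wins-with-reserve allocation rule. Strict monotonicity for $\vali>r$ holds because the interim allocation probability $\dist(\vali)^{n-1}$ is strictly increasing on the support, and combined with $\bid(r)=r\dist(r)^{n-1}$ this gives $\bid(\vali)>r\dist(r)^{n-1}$ whenever $\vali>r$. So every agent with value above $r$ clears the reserve, and among these the highest-value agent is also the highest bidder.

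The only subtlety---and the place one needs to be careful---is ruling out deviations by agents with $\vali<r$, since Lemma~\ref{l:all-pay-bids} assigned such agents bid $0$. An agent with $\vali<r$ who raises her bid to exactly the reserve $r\dist(r)^{n-1}$ wins only when all other $n-1$ values are below $r$ (an event of probability $\dist(r)^{n-1}$), yielding expected utility $(\vali - r)\dist(r)^{n-1} < 0$; any higher bid only makes things worse by revenue equivalence with the corresponding second-price deviation. Hence no profitable deviation exists, so the equilibrium allocation is exactly highest-value-wins with reserve value $r$, completing the proof.
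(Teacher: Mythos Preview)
Your proof is correct and follows the paper's approach: the paper's argument (given in the paragraph immediately preceding the lemma) is precisely your step~(ii), computing $\bid(r)=r\dist(r)^{n-1}$ via revenue equivalence with the second-price auction. You add the monotonicity check and the explicit verification that agents with $\vali<r$ have no profitable deviation, which the paper leaves implicit; these additions are sound and make the argument more complete, but the core idea is identical.
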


%
%
For irregular distributions, i.e., when $\virt(\cdot)$ is
non-monotone, the revenue-optimal auction is not reserve-price based.
Instead it selects the highest virtual value subject to monotonicity
of the allocation rule.  This optimization can be simplified by a very
general {\em ironing} technique.
\begin{theorem} 
\label{t:irregular-ivv-optimal}
\cite{mye-81,HR-08} 
There is an {\em ironing procedure} that converts any virtual
valuation function $\virt(\cdot)$ to a {\em ironed virtual valuation}
function $\ivirt(\cdot)$ that is monotone and has the property that
maximizing $\virt(\cdot)$ subject to monotonicity (of the allocation
rule) is equivalent to maximizing $\ivirt(\cdot)$ point-wise, with
ties broken randomly.  The BNE with this outcome is optimal.
\end{theorem}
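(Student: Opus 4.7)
The plan is to establish the theorem via the standard construction of the concave-hull revenue curve in quantile space. First I would reparameterize in quantiles: for an agent with value $\vali$, let $q = 1 - \dist(\vali)$ be the probability that a fresh draw exceeds $\vali$, and define the revenue curve $R(q) = q \cdot \virt(\dist^{-1}(1-q))$, which is the expected revenue from posting price $\dist^{-1}(1-q)$ to a single agent. A direct differentiation (using the definition of $\virt$) gives $R'(q) = \virt(\dist^{-1}(1-q))$, so the virtual value is precisely the derivative of the revenue curve in quantile space, and monotonicity of $\virt$ is equivalent to concavity of $R$.

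Next I would define $\bar{R}(q)$ to be the concave upper envelope of $R$ on $[0,1]$ with $R(0)=R(1)=0$, and set the ironed virtual value to be $\ivirt(\vali) := \bar{R}'(1 - \dist(\vali))$. Concavity of $\bar{R}$ makes $\bar{R}'$ non-increasing in $q$, which translates into $\ivirt(\cdot)$ being non-decreasing in $\vali$; moreover $\ivirt = \virt$ wherever $\bar{R} = R$, and $\ivirt$ is constant on each maximal ``ironed'' interval where $\bar{R} > R$.

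The core computation is an integration-by-parts identity. For any monotone interim allocation rule $\alloci(\cdot)$, view it as a function of quantile and write
\[
\expect[\vali]{\virt(\vali)\,\alloci(\vali)} \;=\; \int_0^1 R'(q)\, \alloci(q)\, dq.
\]
Integrating by parts once to move the derivative off $R$, replacing $R$ by $\bar R$ (both vanish at the endpoints), and integrating by parts back, I would obtain
\[
\expect[\vali]{\virt(\vali)\,\alloci(\vali)} \;=\; \expect[\vali]{\ivirt(\vali)\,\alloci(\vali)} \;-\; \int_0^1 \bigl(\bar{R}(q) - R(q)\bigr)\, d\alloci'(q),
\]
with the last term interpreted as a Stieltjes integral. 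Because $\bar R \ge R$ everywhere and $\alloci$ is monotone (so $d\alloci'$ is a non-positive measure in quantile space, where higher quantile means lower value), the correction term is non-positive; and it vanishes exactly when $\alloci$ is constant on every ironed interval.

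It follows that the expected virtual surplus is maximized over monotone rules by any rule that (a) maximizes $\ivirt$ pointwise, and (b) stays constant on each ironed interval---the latter achievable by breaking ties at equal ironed virtual values uniformly at random. Such a rule is monotone (since $\ivirt$ itself is monotone) and hence implementable in BNE by Theorem~\ref{t:BNE-char}, proving optimality. The main obstacle is the integration-by-parts step: $R$ may fail to be smooth at the endpoints of ironed intervals and $\alloci$ need not be differentiable, so the identity must be justified in the Stieltjes sense, with care that the boundary terms cancel using $\bar R(0)-R(0) = \bar R(1)-R(1) = 0$.
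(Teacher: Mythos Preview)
The paper does not supply its own proof of this statement: Theorem~\ref{t:irregular-ivv-optimal} is quoted as a background fact with citations to Myerson~\cite{mye-81} and Hartline--Roughgarden~\cite{HR-08}, and no argument is given. So there is nothing in the paper to compare your attempt against; your sketch is essentially the standard quantile-space ironing argument from those references, and the overall architecture (revenue curve, concave hull, integration by parts, constancy on ironed intervals) is exactly right.

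Two corrections to the details. First, your definition of the revenue curve is off: you want $R(q) = q\cdot \dist^{-1}(1-q)$ (price times probability of sale), not $R(q) = q\cdot \virt(\dist^{-1}(1-q))$. With the correct definition one computes $R'(q) = \dist^{-1}(1-q) + q\cdot\frac{-1}{\dens(\dist^{-1}(1-q))} = \virt(\dist^{-1}(1-q))$, which is the identity you then use; with your stated definition that differentiation does not go through. Also, $R(1)$ equals the lower endpoint of the support and need not be $0$; what you actually need (and what holds) is $\bar R(0)=R(0)$ and $\bar R(1)=R(1)$, so that the boundary term $[(R-\bar R)\alloci]_0^1$ vanishes.

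Second, your integration-by-parts identity has a sign slip. Carrying it out carefully gives
\[
\int_0^1 R'(q)\,\alloci(q)\,dq \;=\; \int_0^1 \bar R'(q)\,\alloci(q)\,dq \;+\; \int_0^1 \bigl(\bar R(q)-R(q)\bigr)\,d\alloci(q),
\]
with a plus sign on the correction term and $d\alloci(q)$ (not $d\alloci'(q)$). Since $\bar R\ge R$ and $\alloci$ is non-increasing in $q$, the correction is $\le 0$, yielding $\expect{\virt\,\alloci}\le\expect{\ivirt\,\alloci}$ with equality when $\alloci$ is constant on each ironed interval --- which is the inequality you need. With these fixes the argument is correct and matches the cited literature.
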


%
%
\paragraph{Crowdsourcing.}
The following model for crowdsourcing contests and its connection to
all-pay auctions was proposed in~\cite{dPV09}.  To outsource a task to
the crowd a principal announces a monetary reward (normalized to 1).
Each of $n$ agents (the crowd) enters a submission.  Agent $i$'s {\em
skill} is denoted by $\vali$ and with {\em effort}, $\effi$, she can
produce a submission with {\em quality} $\pricei = \vali \effi$, i.e.,
her skill can be thought of as a rate of work and her effort the
amount of work.  Each agent's skill is her private information.  If
$\alloci$ fraction of the reward is awarded to agent $i$ then her
utility is $\utili = \alloci - \effi$.  From her perspective $\vali$
is a constant so maximizing utility is equivalent to maximizing
$\vali\utili = \vali\alloci - \pricei$.  Notice that this latter
formulation of the agent's objective mirrors that from the single-item
auction setting discussed previously; furthermore, as the agents exert
effort up-front, crowdsourcing contests intrinsically have all-pay
semantics.  Because of this connection, it will be convenient to refer
interchangeably to contests as auctions, skills as values, submission qualities as payments,
and rewards as allocations.

%
%
The objective for crowdsourcing contests $\A$ is to maximize the
quality of the best submission.  Because of the connection to all-pay
auctions we refer to this objective as the {\em maximum payment}
objective and denote its value for an auction $\A$ as $\MP{\A} =
\expect[\vals]{\max_i \pricei(\vals)}$.  This objective is quite
different from the standard revenue maximization objective $\Rev{\A} =
\expect[\vals]{\sum_i \pricei(\vals)}$.

%
%
One aim of this paper is to quantify the loss the principal incurs from
running an all-pay auction versus a more conventional means of
contracting.  For instance, standard formats for procurement auctions
are first- or second-price.  Importantly, in first- and second-price
auctions $\A$ all the payment comes from the highest bidder, therefore
$\MP{\A} = \Rev{\A}$ and the principal is able to extract quality
workmanship with no loss.  In contrast, in all-pay auctions which are
revenue equivalent to first- and second-price auctions the maximum
payment is not equal to the total revenue and thus the efforts of
non-winners constitute a loss in performance.  We thus quantify the
{\em utilization ratio} of an auction $\A$ as
$\frac{\Rev{\A}}{\MP{\A}}$.

%
%
We will see that the all-pay auction that optimizes maximum payment is
not the same as the auction (all-pay or otherwise) that maximizes
revenue.  We define the {\em approximation ratio} of an all-pay
auction to quantify its maximum payment relative to the revenue of
the optimal (first-price or second-price) auction, i.e., $\A$'s approximation ratio is
$\frac{\Rev{\OPT}}{\MP{\A}}$.  The {\em cost of crowdsourcing} (over
conventional procurement) is then the approximation ratio of the best
all-pay auction, i.e., $\inf_{\A} \frac{\Rev{\OPT}}{\MP{\A}}$.

\paragraph{Non-zero density Assumption.}
For the rest of this paper, the space of valuations $V$ is assumed to be an interval
and the density function $\dens(\cdot)$ is assumed to be
non-zero everywhere in $V$.

\section{Utilization and approximation ratios}
\label{sec:poa}

As noted previously, the maximum payment of a second- or first-price
auction is equal to its total payment or revenue. On the other hand,
in all-pay auctions the payment made by non-winners leads to a loss in
performance. In this section we quantify this loss for a special class
of all-pay auctions, namely those that always reward the highest
bidder subject to an anonymous reserve price. This further allows us
to find a simple all-pay auction that approximates optimal
procurement.

In this section we consider highest-bidder-wins reserve-price auctions
under either second-price or all-pay semantics. It is easy to see that
under all-pay semantics these auctions induce symmetric continuous
increasing bid functions at BNE and therefore their allocation
function is identical to a second-price auction with an appropriate
reserve price.

\begin{theorem}
\label{thm:util-ratio}
Let $\A$ be any highest-bidder-wins reserve-price all-pay
auction. Then $\Rev{\A}\le 2\MP{\A}$. That is, its utilization ratio
is bounded by $2$.
\end{theorem}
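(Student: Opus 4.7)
My plan is to reduce $\Rev{\A} \leq 2 \MP{\A}$ to a clean univariate inequality about the equilibrium bid function of Lemma \ref{l:all-pay-bids}, and then settle it by elementary sign analysis. By revenue equivalence with the corresponding second-price auction, the bid of an agent with value $v \geq r$ is
$$\bid(v) \;=\; v F(v)^{n-1} - \int_r^v F(y)^{n-1} dy \;=\; \int_0^v \max(y, r) (n-1) F(y)^{n-2} f(y) dy,$$
which is non-decreasing, so the highest bid is always $\bid(\valith[1])$. By symmetry, $\Rev{\A} = n \expect[\vali]{\bid(\vali) \mathbf{1}[\vali \geq r]}$ and $\MP{\A} = n \expect[\vali]{\bid(\vali) F(\vali)^{n-1} \mathbf{1}[\vali \geq r]}$. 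Substituting $u = F(v)$, writing $T = F^{-1}$, $a = F(r)$, and $\beta(u) := \bid(T(u))$, the desired bound becomes $\int_a^1 \beta(u)(2 u^{n-1} - 1) du \geq 0$.

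Next I would plug in $\beta(u) = \int_0^u \max(T(s), r)(n-1) s^{n-2} ds$ and apply Fubini to rewrite the integral as $r a^{n-1} q(a) + \int_a^1 T(s)(n-1) s^{n-2} q(s) ds$, where $q(s) := \int_s^1 (2 u^{n-1} - 1) du = 2(1 - s^n)/n - (1 - s)$. All distributional dependence now enters only through the monotone quantile function $T$, and the remaining problem is one-dimensional. Since $q(1) = 0$, $q(0) = 2/n - 1 \leq 0$, and $q'(s) = 1 - 2 s^{n-1}$ has a single sign change, $q$ is unimodal with a unique zero $s_0 \in [0, 1)$, non-positive on $[0, s_0]$ and non-negative on $[s_0, 1]$ (for $n = 2$, $s_0 = 0$ and $q \geq 0$ throughout). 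If $a \geq s_0$ the claim is immediate because every factor in both terms is non-negative.

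The main obstacle is the case $a < s_0$, where $q(a) < 0$ and the integrand of the $T$-integral changes sign inside $[a, 1]$. The key trick is that the monotone $T$ and the sign-changing $q$ are ``comonotone about $s_0$'': on $[a, s_0]$ we have $T(s) \leq T(s_0)$ while $q(s) \leq 0$, and on $[s_0, 1]$ both inequalities reverse, so $T(s) q(s) \geq T(s_0) q(s)$ pointwise. This lower-bounds the integral by $T(s_0) \cdot Q(a)$ with $Q(a) := \int_a^1 (n-1) s^{n-2} q(s) ds$. A short unimodality argument using $Q'(a) = -(n-1) a^{n-2} q(a)$, $Q(1) = 0$, and $Q(0) = 1/(n(2n-1)) > 0$ shows $Q \geq 0$; combined with $T(s_0) \geq T(a) = r \geq 0$, the claim reduces to $R(a) := a^{n-1} q(a) + Q(a) \geq 0$, which follows from yet another unimodality argument based on $R'(a) = a^{n-1}(1 - 2 a^{n-1})$, $R(0) = Q(0) > 0$, and $R(1) = 0$.
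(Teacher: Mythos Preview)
Your proof is correct, and the core idea---reduce to a quantile-space integral and exploit a monotone-times-single-sign-change kernel---matches the paper's. The decomposition is where you differ. You expand $\beta(u)$ through its integral representation and apply Fubini, producing the two-term expression $r\,a^{n-1}q(a)+\int_a^1 T(s)(n-1)s^{n-2}q(s)\,ds$ and then need three separate unimodality arguments (on $q$, $Q$, $R$) together with a case split on $a$ versus $s_0$. The paper instead factors $b(v)=g(v)\,F(v)^{n-1}$, where $g(v)$ is the conditional second-price payment given that $v$ is the highest value (non-decreasing, and zero below the reserve). After the same change of variable this yields
\[
A-B \;=\; n\int_0^1 g\bigl(F^{-1}(t)\bigr)\,t^{n-1}\bigl(2t^{n-1}-1\bigr)\,dt,
\]
and since the kernel $t^{n-1}(2t^{n-1}-1)$ has a single sign change (negative then positive) and integrates to $\tfrac{2}{2n-1}-\tfrac{1}{n}>0$, the very comonotonicity trick you used on $T$ and $q$ finishes the proof in one stroke, with no case analysis and no auxiliary functions $Q$, $R$. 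Your argument works; the factorization through $g$ is simply the shorter route to the same destination.
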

\begin{proof}
Let $\allocs$ denote the allocation function of the auction and
suppose that the bid function that it induces in BNE is given by
$\bid(\val)$.  We can write the expected revenue of the auction as the
sum of the contribution from the winning agent (i.e. the agent with
the maximum payment), and the contribution from other agents. Call the
first term $A$ and the second $B$. 
\begin{align*}
\Rev{\A} &= \underbrace{\sum_i \int_{\val} \bid(\val) \prob[\valsmi]{\val=\valith[1]}\dens(\val)\,d\val}_A
+ \underbrace{\sum_i \int_{\val} \bid(\val) (1-\prob[\valsmi]{\val=\valith[1]})\dens(\val)\,d\val}_B
\end{align*}
Note that $A$ is precisely $\MP{\A}$. We will now show that $A\ge B$,
or $A-B\ge 0$. By the revenue equivalence principle, $\bid(\val)$ is
equal to the expected payment that an agent with value $\val$ makes in
a second-price auction with the same allocation rule
(Lemma~\ref{l:all-pay-bids}). Let $g(\val)$ denote the expected payment 
in the second-price auction with reserve, given that $\val$ is the highest
value. Then we get that $\bid(\val) =
g(\val)\prob[\valsmi]{\val=\valith[1]}$. We note that $\prob[\valsmi]{\val=\valith[1]} =
\dist(v)^{n-1}$ is a strictly increasing function since $\dens(v) \neq
0$ for all $v \in V$. Now we can write $A-B$ as
\begin{align*}
A - B
&= \sum_i \int_{\val} \bid(\val)(2\prob[\valsmi]{\val=\valith[1]}-1)\dens(\val)\,d\val\\
&= \sum_i \int_{\val} g(\val)\dist(\val)^{n-1}(2\dist(\val)^{n-1}-1)\dens(\val)\,d\val\\
&= n\cdot \int_{\val} g(\dist^{-1}(t))t^{n-1}(2t^{n-1}-1)\,dt
\end{align*}
where, in the third equality, we substituted $t$ for $F(v)$. 

Next we note that ignoring the $g$ term, the integral is non-negative:
\begin{align*}
\int_{0}^{1}t^{n-1}(2t^{n-1}-1)\,dt
= \frac{2}{2n-1} - \frac{1}{n} > 0
\end{align*}
Let us consider the effect of the $g$ term. The function
$t^{n-1}(2t^{n-1}-1)$ vanishes for two values of
$t$ namely $0$ and $(1/2)^{\frac{1}{n-1}}$. Between these
two values the function is negative, and for $t >
(1/2)^{\frac{1}{n-1}}$, the function is positive. So when the function
is multiplied by $g(\dist^{-1}(t))$, a non-decreasing function of $t$, the negative
portion of the integral is magnified to a smaller extent than the
positive portion, implying that the integral stays positive. This
completes the proof.
\end{proof}

\paragraph{Tightness of Theorem~\ref{thm:util-ratio}.}
We now exhibit an example where the utilization ratio of $2$ is
tight. Consider a setting with $n$ agents, with each agent's value
distributed independently according to the $U[0,1]$
distribution. Consider the second-price auction with no reserve price.
The expected revenue of this auction can be computed to be
$\frac{n-1}{n+1}$. The corresponding all-pay auction induces a bid function
$\bid(\val) = \frac{n-1}{n}\val^n$. The expected revenue of the all-pay auction is the same as
that of the second-price auction, namely, $\frac{n-1}{n+1}$, which
approaches $1$ as $n$ increases. On the other hand, the expected
maximum payment can be computed to be
$\frac{n-1}{2n}$ which approaches $1/2$ as $n$
increases. In Section~\ref{sec:opt} we will revisit this example and
show that even the optimal all-pay auction (which is slightly better)
only achieves an expected maximum payment approaching $1/2$ for this
setting.

\paragraph{Utilization ratio for other all-pay auctions.}
We note that the bound on utilization ratio does not hold for
arbitrary symmetric all-pay auctions. For example, the all-pay auction
corresponding to a revenue-optimal auction that requires ironing over
large intervals of values induces a bidding function that is constant
over those intervals. This results in many agents being tied for the
reward, all making the same (low) payments but only one contributing
to the maximum payment.

\paragraph{Approximation ratio.} Recall that the approximation ratio
of an all-pay auction $\A$ is $\frac{\Rev{\OPT}}{\MP{\A}}$, where
$\OPT$ is the revenue optimal auction.  We now use the bound on
utilization ratio to prove that all-pay auctions achieve good
approximation ratios. In particular, we note that for regular
distributions highest-bidder-wins reserve-price auctions are
revenue-optimal (Theorem~\ref{t:regular-reserve-optimal}). For
irregular distributions, \cite{CHMS-10} show that highest-bidder-wins
auctions with an anonymous reserve price are within a factor of $2$ of
optimal\footnote{While highest-bidder-wins auctions with a
  non-anonymous reserve give a better approximation to the optimal
  revenue, they induce an asymmetric all-pay auction and
  Theorem~\ref{thm:util-ratio} does not apply.}. The following
corollaries then follow from Theorem~\ref{thm:util-ratio} upon
applying the revenue equivalence principle.

\begin{corollary}
\label{cor:reg-approx}
When agents' value distributions are regular, there exists an $\alpha$
such that the highest-bid-wins all-pay auction with reserve bid $\alpha$ 
achieves an approximation ratio at most $2$.
\end{corollary}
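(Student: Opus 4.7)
The plan is to combine Theorem~\ref{t:regular-reserve-optimal}, Lemma~\ref{l:all-pay-reserves}, the revenue equivalence principle, and Theorem~\ref{thm:util-ratio} in a short chain. The key observation is that the revenue-optimal auction for a regular distribution is itself of the ``highest-bidder-wins with reserve'' form, so we can mimic it under all-pay semantics without any loss of revenue, and then lose only a factor of two when passing from revenue to maximum payment.

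Concretely, I would first let $r^{\ast} = \virt^{-1}(0)$ be the monopoly reserve in value-space, so that by Theorem~\ref{t:regular-reserve-optimal} the optimal auction $\OPT$ is the highest-bidder-wins second-price auction with value reserve $r^{\ast}$. Then I would invoke Lemma~\ref{l:all-pay-reserves} to define $\A$ as the highest-bid-wins all-pay auction with reserve bid
\[
\alpha \;=\; r^{\ast}\,\dist(r^{\ast})^{n-1},
\]
which implements exactly the same (highest-value-wins subject to reserve $r^{\ast}$) allocation rule as $\OPT$.

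Since $\A$ and $\OPT$ share an allocation rule, revenue equivalence yields $\Rev{\A} = \Rev{\OPT}$. The auction $\A$ is a highest-bidder-wins reserve-price all-pay auction, so Theorem~\ref{thm:util-ratio} applies directly to give $\Rev{\A} \le 2\,\MP{\A}$. Chaining these,
\[
\MP{\A} \;\ge\; \tfrac{1}{2}\Rev{\A} \;=\; \tfrac{1}{2}\Rev{\OPT},
\]
which is exactly the claimed approximation ratio of $2$.

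The main thing to be careful about is really just the bookkeeping between value-space and bid-space reserves: one must verify that the all-pay auction with reserve bid $\alpha$ from Lemma~\ref{l:all-pay-reserves} is indeed a ``highest-bidder-wins reserve-price all-pay auction'' in the precise sense required by Theorem~\ref{thm:util-ratio} (i.e., it admits a symmetric monotone BNE bid function and allocates to the highest bidder whenever the reserve is met). This is immediate from the fact that in the corresponding second-price auction with reserve $r^{\ast}$, bidders with value above $r^{\ast}$ follow a strictly increasing symmetric equilibrium, and Lemma~\ref{l:all-pay-bids} transfers this monotonicity to the all-pay bid function. No new calculation is needed.
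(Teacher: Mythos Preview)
Your proposal is correct and follows exactly the paper's own argument: for regular distributions the revenue-optimal auction is highest-bidder-wins with the monopoly reserve (Theorem~\ref{t:regular-reserve-optimal}), revenue equivalence transfers this revenue to the all-pay version with reserve bid $\alpha$ from Lemma~\ref{l:all-pay-reserves}, and Theorem~\ref{thm:util-ratio} then loses at most a factor of two. The paper states this chain in one sentence; you have simply spelled out the same steps in more detail.
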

\begin{corollary}
\label{cor:nonreg-approx}
For all $\iid$ value distributions, there exists an $\alpha$
such that the highest-bid-wins all-pay auction with reserve bid 
$\alpha$ achieves an approximation ratio at most $4$.
\end{corollary}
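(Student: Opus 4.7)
The plan is to combine the utilization-ratio bound of Theorem~\ref{thm:util-ratio} with the approximation result of \cite{CHMS-10} via revenue equivalence. Specifically, \cite{CHMS-10} guarantees that for any \iid{} distribution there is some value-space reserve $r$ such that the second-price auction $\A_2$ with reserve $r$ satisfies $\Rev{\A_2} \ge \tfrac{1}{2}\Rev{\OPT}$.

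The key step is to translate this into an all-pay reserve. By Lemma~\ref{l:all-pay-reserves}, the highest-bid-wins all-pay auction with reserve bid $\alpha = r\,\dist(r)^{n-1}$ implements exactly the same highest-value-wins-above-$r$ allocation rule as $\A_2$. Call this all-pay auction $\A$. Revenue equivalence then gives $\Rev{\A} = \Rev{\A_2} \ge \tfrac{1}{2}\Rev{\OPT}$.

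Finally, since $\A$ is a highest-bidder-wins reserve-price all-pay auction, Theorem~\ref{thm:util-ratio} applies and yields $\MP{\A} \ge \tfrac{1}{2}\Rev{\A} \ge \tfrac{1}{4}\Rev{\OPT}$, i.e., the approximation ratio $\Rev{\OPT}/\MP{\A}$ is at most $4$, as claimed.

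I do not anticipate any real obstacle: the corollary is essentially bookkeeping, combining three already-stated facts (the \cite{CHMS-10} anonymous-reserve approximation, revenue equivalence in the form of Lemma~\ref{l:all-pay-reserves}, and Theorem~\ref{thm:util-ratio}). The only subtlety is making sure the all-pay auction we invoke is symmetric so that Theorem~\ref{thm:util-ratio} applies — this is precisely why an \emph{anonymous} reserve from \cite{CHMS-10} is used rather than the (better but asymmetric) non-anonymous reserves referenced in the footnote.
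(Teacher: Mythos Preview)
Your proposal is correct and matches the paper's own argument essentially line for line: the paper also combines the \cite{CHMS-10} anonymous-reserve $2$-approximation, revenue equivalence, and Theorem~\ref{thm:util-ratio} to obtain the factor of~$4$. Your observation about why an anonymous reserve is needed (so that Theorem~\ref{thm:util-ratio} applies to a symmetric auction) is exactly the content of the paper's footnote.
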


These corollaries imply that the \emph{cost of crowdsourcing} is
always small --- no more than $4$.  The above example with uniform
distributions shows that the approximation factor in
Corollary~\ref{cor:reg-approx} is tight. An extension of the same
example in Section~\ref{sec:opt} shows that the worst-case cost of
crowdsourcing can be no smaller than $2$.

\section{Optimal crowdsourcing contests}
\label{sec:opt}

In this section we characterize optimal crowdsourcing contests, first
over a limited class of so-called ``static'' contests, and then over
all contests.

\paragraph{Static Contests.}Consider the class of contests that
predetermine the division of the reward into $\stat_1,\dots \stat_n$,
with $\sum_i \stat_i =1$. Agents are ordered by their submission
qualities and awarded the corresponding fraction of reward, i.e., the
$i$th best submission gets an $\stat_i$ fraction of the reward.  Note
that the Topcoder.com example mentioned in the introduction, where the
best submission receives 2/3rds and the second-best submission
receives 1/3rd of the total reward, falls under this class of
contests. For this class, the following theorem shows that the optimal
contest allocates the entire reward to the best submission; we defer
the proof to Appendix~\ref{sec:app}.

\begin{theorem}\label{thm:statOpt}
When the bidders' valuations are $\iid$, the optimal static
all-pay auction is a highest-bid-wins auction.

\end{theorem}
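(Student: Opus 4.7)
The plan is to exploit the linearity of the maximum payment $\MP{\A}$ in the reward shares $(\stat_1, \ldots, \stat_n)$. First, I will apply Theorem~\ref{t:BNE-char} to describe the symmetric BNE of the static contest $\A$. The interim allocation of a bidder with value $\val$ is $\alloc(\val) = \sum_k \stat_k P_k(\val)$, where $P_k(\val) = \binom{n-1}{k-1}\dist(\val)^{n-k}(1-\dist(\val))^{k-1}$ is the probability of having the $k$-th highest value. Under the natural monotonicity assumption $\stat_1 \geq \cdots \geq \stat_n \geq 0$, the induced bid function $\bid(\val) = \val\,\alloc(\val) - \int_0^\val \alloc(z)\,dz$ is increasing, so the highest-value bidder submits the highest bid. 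Integration by parts plus Fubini then give
\[ \MP{\A} = \int z\,\alloc'(z)\,(1-\dist(z)^n)\,dz, \]
which is \emph{linear} in $(\stat_k)$.

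Second, by linearity, the maximum of $\MP{\A}$ over the feasible monotone simplex is attained at a vertex. These vertices correspond precisely to the contests that split the prize equally among the top $j$ bidders for some $j \in \{1, \ldots, n\}$. Writing $\MP{\A} = \sum_k \stat_k M_k$ with $M_k$ certain distribution-dependent coefficients, it suffices to prove $M_1 \geq M_k$ for every $k \geq 2$, for this guarantees that winner-take-all (the $j=1$ vertex) dominates every other vertex.

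Third, I will compare $M_1$ and $M_k$ by substituting $t = \dist(z)$, which gives
\[ M_1 - M_k = \int_0^1 \dist^{-1}(t)\bigl[\tilde{P}_1'(t) - \tilde{P}_k'(t)\bigr](1-t^n)\,dt, \]
with $\tilde{P}_k(t) := \binom{n-1}{k-1}t^{n-k}(1-t)^{k-1}$. The distribution enters only through $\dist^{-1}$, which I will handle via the layer-cake identity $\dist^{-1}(t) = \int_0^\infty \mathbf{1}[\dist(c) \leq t]\,dc$. Swapping the order of integration reduces the claim to the distribution-free inequality
\[ \phi_k(s) := \int_s^1 \bigl[\tilde{P}_1'(t) - \tilde{P}_k'(t)\bigr](1-t^n)\,dt \geq 0 \quad \text{for every } s \in [0,1]. \]

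Verifying $\phi_k \geq 0$ is the main obstacle of the proof. The plan here is to use that $\tilde{P}_k - \tilde{P}_1$ is unimodal on $[0,1]$: in the factored form $\tilde{P}_k(t) - \tilde{P}_1(t) = t^{n-k}\bigl[\binom{n-1}{k-1}(1-t)^{k-1} - t^{k-1}\bigr]$, the bracketed factor is a strictly decreasing polynomial with a unique interior zero $t_k^*$, and direct sign analysis of the derivative shows a single local maximum in $(0, t_k^*)$ followed by strict decrease on $(t_k^*, 1)$. Consequently $(\tilde{P}_k' - \tilde{P}_1')$ changes sign exactly once, making $\phi_k$ itself unimodal, so $\min_{s\in[0,1]} \phi_k(s) = \min(\phi_k(0), \phi_k(1))$. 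The endpoint value $\phi_k(1) = 0$ is immediate, and an explicit Beta-function computation gives $\phi_k(0) = \tfrac{n}{2n-1}\bigl(1 - \prod_{i=1}^{k-1}\tfrac{n-i}{2n-1-i}\bigr)$, which is nonnegative because each factor $\tfrac{n-i}{2n-1-i} \leq 1$.
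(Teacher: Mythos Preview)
Your argument is correct and, in its overall architecture, parallels the paper's: both reduce the claim to the coefficient inequality $M_1\ge M_k$ (the paper phrases this as $\tfrac{d\,\MPi{\A}}{d\stat_k}\le 0$ when mass is shifted between $\stat_1$ and $\stat_k$), both reduce that in turn to showing a single-variable function on $[0,1]$ is nonnegative by checking it is unimodal with the correct boundary values, and both evaluate the nontrivial boundary value via a Beta/Gamma identity---indeed your $\phi_k(0)$ equals exactly $n$ times the paper's $H_n(0)$. Where you genuinely differ is the intermediate machinery. The paper expresses the bid function through second-price payment formulas involving conditional order-statistic expectations $g(j,m,z)$, drops several negative terms to obtain a tractable upper bound $Q$, and only after substituting closed forms for $g(1,\cdot,\cdot)$ and $g(2,\cdot,\cdot)$ does it change variables and swap integrals to arrive at $H_n$. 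You instead integrate by parts once to obtain $\MP{\A}=\int z\,\alloc'(z)\,(1-\dist(z)^n)\,dz$ and then apply the layer-cake identity to $\dist^{-1}$, which peels the distribution off immediately and lands on $\phi_k$ with no terms discarded. Your route is shorter and cleaner.

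Two small points are worth tightening. First, the ``direct sign analysis'' establishing unimodality of $\tilde P_k-\tilde P_1$ is not quite immediate and deserves one line: writing $g'(t)=0$ (for $t\in(0,1)$) as $R(t)=1$ with
\[
R(t)=\frac{\binom{n-1}{k-1}(1-t)^{k-2}\bigl[(n-k)-(n-1)t\bigr]}{(n-1)\,t^{k-1}},
\]
one checks $(\log R)'=-\tfrac{k-2}{1-t}-\tfrac{n-1}{(n-k)-(n-1)t}-\tfrac{k-1}{t}<0$ on the interval where $R>0$, so $R=1$ has a unique solution and $g'$ changes sign exactly once. Second, your restriction to monotone $(\stat_k)$ is harmless but unnecessary: the identity $\MP{\A}=\sum_k \stat_k M_k$ holds for every static contest whose induced interim allocation is monotone (the only requirement for a symmetric BNE), and once $M_1\ge M_k$ for all $k$ the inequality $\sum_k \stat_k M_k\le M_1$ follows for every such $(\stat_k)$ in the probability simplex directly, without passing through the vertices of the monotone simplex.
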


\paragraph{Symmetric Contests.} For the rest of this section, we focus on the class
of arbitrary symmetric contests. 
A symmetric
auction is one where a permutation of bids results in
the same permutation of the allocation and payments. Since agents'
private values are identically distributed, any such symmetric
allocation rule induces a symmetric equilibrium in which all agents
use an identical bidding function. This in turn implies that the
allocation as a function of agents' values is also symmetric across
agents.

We first present a characterization of the expected maximum payment of
any symmetric all-pay auction in terms of an appropriately defined
virtual value function. This characterization immediately implies that
the optimal mechanism is a virtual value maximizer.

\begin{definition}
\label{def:vv}
For a given distribution $\dist$ with density function $\dens$ and an
integer $n$, we define the {\em virtual value for maximum payment},
$\virta_n(z)$ as
\[\virta_n(z) = z\dist(z)^{n-1} - \frac{1-\dist(z)^{n}}{n\dens(z)}\]
\end{definition}

\begin{lemma}
\label{lem:mp=vv}
Consider a setting with $n$ agents and values distributed \iid\ 
according to distribution $\dist$. Let $\A$ be a symmetric all-pay
auction implementing the allocation function $\allocs$. Then $\MP{\A}
= \expect{\sum_i \alloci(\vals)\virta_n(\vali)}$.
\end{lemma}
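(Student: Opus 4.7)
The plan is to adapt Myerson's derivation of revenue-as-virtual-value to the maximum-payment objective. The key observation is that in a symmetric all-pay BNE the bid function $\bid(\cdot)$ is monotone non-decreasing, so the agent paying the maximum is always the agent with the highest value; therefore $\MP{\A}$ equals the expected bid of the top-order-statistic bidder, which sets up a Myerson-style manipulation.

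First I would invoke Theorem~\ref{t:BNE-char} to write the common symmetric bid function as $\bid(\val) = \val\,\alloci(\val) - \int_0^\val \alloci(z)\,dz$, where by symmetry $\alloci(\cdot)$ is a common interim allocation function across agents. Monotonicity of $\alloci(\cdot)$ gives $\bid'(\val) = \val\,\alloci'(\val)\ge 0$, so using density $n\dist(\val)^{n-1}\dens(\val)$ for the top order statistic,
$$\MP{\A} = \int \bid(\val)\,n\dist(\val)^{n-1}\dens(\val)\,d\val.$$
Substituting the expression for $\bid$ splits this into two terms. For the nuisance term $\int n\dist(\val)^{n-1}\dens(\val)\int_0^\val \alloci(z)\,dz\,d\val$, I would swap the order of integration (Fubini), evaluate the inner integral as $\int_z^{\infty} n\dist(\val)^{n-1}\dens(\val)\,d\val = 1-\dist(z)^n$, and recombine to obtain
$$\MP{\A} = \int \alloci(\val)\bigl[n\val\dist(\val)^{n-1}\dens(\val) - (1-\dist(\val)^n)\bigr]\,d\val = n\int \alloci(\val)\,\virta_n(\val)\,\dens(\val)\,d\val,$$
where the bracketed factor is exactly $n\dens(\val)\virta_n(\val)$ by Definition~\ref{def:vv}.

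To finish, I would rewrite this single integral as the claimed expectation. By symmetry each of the $n$ agents contributes the same term, so $n\int \alloci(\val)\virta_n(\val)\dens(\val)\,d\val = \sum_i \expect[\vali]{\alloci[i](\vali)\virta_n(\vali)} = \expect[\vals]{\sum_i \alloci[i](\vals)\virta_n(\vali)}$, where the last equality is the tower rule conditioning on $\vali$ together with the definition of the interim allocation. The main obstacle is essentially bookkeeping: verifying that Fubini applies on $V$ (immediate under the non-zero density assumption), checking that $\pricei(0)=0$ so that no additive constant appears in $\bid$, and noting that symmetry genuinely reduces all $\alloci[i]$'s to one common interim function $\alloci(\cdot)$. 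The only substantive step is recognizing that the two post-Fubini terms combine into exactly $n\dens(\val)\virta_n(\val)$, which is the maximum-payment analog of Myerson's virtual value identity and gives the correct normalization by $n$ (not by $1$) in the denominator of $\virta_n$.
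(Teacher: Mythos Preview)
Your proposal is correct and follows essentially the same approach as the paper: express the symmetric bid via Theorem~\ref{t:BNE-char}, integrate it against the top-order-statistic weight $F(v)^{n-1}f(v)$, swap the order of integration on the $\int_0^v \alloci(z)\,dz$ term, and regroup to expose $\virta_n$. The only cosmetic difference is that you absorb the factor of $n$ at the outset by using the density $nF(v)^{n-1}f(v)$ of $\valith[1]$, whereas the paper computes a per-agent contribution $\MPi{\A}$ and sums over $i$ at the end.
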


\begin{proof}
Suppose that the allocation function $\allocs$ induces a symmetric bid
function $\bid(\cdot)$ on the agents. Recall that by the revenue
equivalence principle, $\bid(\val)$ is equal to the expected payment
that an agent with value $\val$ makes under $\allocs(\cdot)$. From
Theorem~\ref{t:BNE-char} we get the following expression for
$\bid(\val)$ where $\alloci$ is the expected allocation to agent $i$
in expectation over $\valsmi$.

\[\bid(\vali) = \vali \alloci(\vali) -
  \int_{z=0}^{\vali}\alloci(z)\,dz\]

Because the equilibrium is symmetric, one of the agents with the
highest bid is the agent with the highest value\footnote{Note that the
  bid function need only be weakly increasing, so there may be ties
  for the highest bid.}, i.e., with $\vali = \valith[1]$. We attribute
the maximum payment received by the mechanism to this agent. We can
now use the above formulation of the bid function to calculate the
expected contribution of agent $i$ to the maximum payment objective.
\begin{align*}
\MPi{\A}
&= \int_{\vali} \bid(\vali)\prob[\valsmi]{\vali=\valith[1]}\dens(\vali)\,d\vali\\
&= \int_{\vali} \left[\vali \alloci(\vali) -
  \int_{z=0}^{\vali}\alloci(z)\,dz\right]\dist(\vali)^{n-1}\densi(\vali)\,d\vali
\end{align*}
In order to simplify the second term in the integral we interchange
the order of integration over $z$ and $\vali$, integrate over $\vali$,
and then rename $z$ as $\vali$. We get:
\begin{align*} 
\MPi{\A}
&= \int_{\vali} \vali \alloci(\vali) \dist(\vali)^{n-1}\densi(\vali)\,d\vali
- \int_{\vali}\alloci(\vali) \left(\frac{1-\dist(\vali)^n}{n}\right)\,d\vali\\
&= \int_{\vali} \left\{ \vali \dist(\vali)^{n-1}
- \frac{1-\dist(\vali)^n}{n\dens(\vali)} \right\}
\times\,\alloci(\vali) \densi(\vali)\,d\vali\\
&= \int_{\vali} \alloci(\vali) \virta_n(\vali)\densi(\vali)\,d\vali\\
&= \expect[\vals]{\alloci(\vals)\virta_n(\vali)}
\end{align*}
Summing over $i$ implies the lemma.
\end{proof}

\paragraph{Optimal allocation rules and regularity.}
The characterization of Lemma~\ref{lem:mp=vv} immediately implies that
in order to maximize the expected maximum payment, we should maximize
the virtual surplus of the mechanism for maximum payment. In other
words, we should allocate the entire reward to the agent that has the
maximum virtual value $\virta_n(\vali)$ (subject to this value being
non-negative). However, this results in a monotone allocation function
only if the virtual value function is monotone non-decreasing. To this
end, we define regularity for maximum payment as follows.
\begin{definition}
A distribution $\dist$ is said to be {\em $n$-regular with respect to
  maximum payment} if $\virta_n(\cdot)$ is a monotone non-decreasing
function. The distribution is said to be {\em regular $\wrt$~maximum
  payment} if $\virta_n(\cdot)$ is monotone non-decreasing for all
positive integers $n$.
\end{definition}
For distributions that are regular $\wrt$~maximum payment, allocating
to the agent with the highest non-negative virtual value is monotone
and therefore can be implemented in BNE.  Since agents have
$\iid$ values, this outcome corresponds to allocating to the agent
with the highest value, who is in turn the agent with the highest
bid. Therefore, the optimal mechanism is a highest-bid-wins
reserve-price mechanism. The reserve value for the mechanism is given
by $\virta_n^{-1}(0)$ and the reserve bid can be computed by applying
Lemma~\ref{l:all-pay-reserves} to this value. We note that generally
the reserve price is a function of $n$ and decreases with $n$, even
for distributions that are regular for all $n$.
\begin{theorem}
\label{thm:opt-reg}
Let $\dist$ be a distribution that is $n$-regular $\wrt$ maximum
payment. Then the optimal all-pay auction for $n$ agents with values
distributed independently according to $\dist$ is a highest-bid-wins
auction with a reserve price.
\end{theorem}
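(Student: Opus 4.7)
The plan is to follow the template of Myerson's revenue-optimality argument, using Lemma~\ref{lem:mp=vv} in place of the classical virtual-value identity and the $n$-regularity hypothesis in place of ordinary regularity.

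First I would pass from the objective to a pointwise optimization. Lemma~\ref{lem:mp=vv} rewrites the maximum payment of any symmetric all-pay auction as $\MP{\A}=\expect[\vals]{\sum_i \alloci(\vals)\virta_n(\vali)}$. Since any symmetric allocation rule must satisfy $\alloci(\vals)\ge 0$ and $\sum_i \alloci(\vals)\le 1$, an upper bound on this expectation is obtained by maximizing the integrand pointwise: on every profile $\vals$, put the entire unit of reward on the agent with the largest $\virta_n(\vali)$ whenever that value is non-negative, and award nothing otherwise. Call this allocation rule $\allocs^*$.

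Next I would show that $\allocs^*$ is in fact achievable by a symmetric all-pay auction. By the $n$-regularity hypothesis, $\virta_n(\cdot)$ is monotone non-decreasing, so $\argmax_i \virta_n(\vali)=\argmax_i \vali$ and the non-negativity condition becomes $\vali\ge \virta_n^{-1}(0)$. Hence $\allocs^*$ is the highest-value-wins rule with reserve value $r:=\virta_n^{-1}(0)$. The corresponding interim allocation rule $\alloci^*(\vali)=\dist(\vali)^{n-1}\mathbf{1}[\vali\ge r]$ is monotone non-decreasing in $\vali$, so by Theorem~\ref{t:BNE-char} it is implementable in BNE by a (symmetric) payment rule, and in particular under all-pay semantics by the bid function given in Lemma~\ref{l:all-pay-bids}.

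Finally, I would translate this into a highest-bid-wins format using Lemma~\ref{l:all-pay-reserves}: the all-pay auction with reserve bid $r\,\dist(r)^{n-1}$ implements exactly the highest-value-wins-with-reserve-value-$r$ rule. Because the equilibrium bid function is strictly increasing above the reserve value, the highest-value-wins rule and the highest-bid-wins rule coincide on a measure-one set of profiles. Thus $\allocs^*$ is realized by a highest-bid-wins all-pay auction with a reserve price, and by construction it matches the pointwise upper bound on $\MP{\A}$, proving optimality. The only subtlety, and the main place where the argument could go wrong without the hypothesis, is the monotonicity step: without $n$-regularity, the pointwise maximizer of $\sum_i \alloci(\vals)\virta_n(\vali)$ need not yield a monotone interim rule and therefore need not be BNE-implementable, which is precisely why the unrestricted (irregular) case requires the ironing machinery addressed elsewhere in the paper.
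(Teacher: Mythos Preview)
Your proposal is correct and follows essentially the same route as the paper: invoke Lemma~\ref{lem:mp=vv} to express $\MP{\A}$ as expected virtual surplus, maximize pointwise by allocating to the highest non-negative $\virta_n(\vali)$, use $n$-regularity to identify this with the highest-value-wins-with-reserve rule, and then translate into a highest-bid-wins all-pay auction via Lemma~\ref{l:all-pay-reserves}. The paper presents this argument in the paragraph immediately preceding the theorem rather than as a formal proof, but your version is a faithful (and slightly more detailed) rendering of the same reasoning.
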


\paragraph{Two examples.} We now revisit the
example with $n$ agents and values distributed according to $U[0,1]$
that was discussed in Section~\ref{sec:poa}. The following expression
defines the virtual value for maximum payment in this case:
\[\virta_n(z) = z^n(1+1/n)-1/n\,\,\text{ for } z\in [0,1]\]
This is an increasing function for all $n$. Therefore, the $U[0,1]$
distribution is regular. The optimal reserve value is given by
$\virta_n^{-1}(0) = (n+1)^{-1/n}$, and the optimal reserve bid is
$1/(n+1)$. Therefore, the optimal all-pay auction serves the highest
bidder subject to her bid being at least $1/(n+1)$. The expected
maximum payment of this auction can be calculated to be
$\frac{n}{2(n+1)}$ which approaches $1/2$ as $n$ increases.

Next consider a setting with two agents and values distributed
$\iid$ according to the exponential distribution. That is,
$\dist(\val) = 1-e^{-\val}$ for $\val\ge 0$. We can calculate the
virtual value function as $\virta_2(z) = (z-1) + e^{-z}(1/2-z)$. This
function is negative below $1.21$ and positive
thereafter. Furthermore, it is non-decreasing above $0.24$,
particularly throughout the range where it is non-negative. So
although the exponential distribution is not regular $\wrt$ maximum
payment, the optimal all-pay auction still turns out to be a
highest-bid-wins auction with a reserve price of $1.21$ and a
corresponding reserve bid of $0.85$.

An interesting point to note about the above example is that
distributions that are regular with respect to the usual notion of
virtual value for revenue, are not necessarily regular with respect to
maximum payment even for $n=2$. However, for a large subset of such
distributions, namely those that satisfy the monotone hazard rate
condition (Definition~\ref{def:mhr} below), the optimal all-pay
auction continues to have the simple form given in
Theorem~\ref{thm:opt-reg}.

\paragraph{Regularity and MHR.}
A common assumption in mechanism design literature is that
value distributions satisfy the {\em monotone hazard rate} (MHR)
condition defined below. Many common distributions such as the
uniform, Gaussian, and exponential distributions satisfy
this property. Distributions that satisfy MHR are regular and
therefore do not require ironing in the context of revenue
maximization. As our example above shows, MHR distributions are not
necessarily regular with respect to maximum payment.

\begin{definition}
\label{def:mhr}
The {\em hazard rate} of a distribution $\dist$ with density function
$\dens$ is defined as $h(x) = \frac{\dens(x)}{1-\dist(x)}$. A
distribution is said to have a {\em monotone hazard rate} (MHR) if the
hazard rate function is monotone non-decreasing.
\end{definition}

\begin{lemma}\label{lem:MHR}
Let $\dist$ be a distribution satisfying the MHR condition. Then for
any $n$ and any interval of values over which $\virta_n$ is
non-negative, $\virta_n$ is monotone non-decreasing.
\end{lemma}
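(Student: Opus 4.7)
The plan is to factor $\virta_n$ in a way that isolates the sign of $\virta_n$ itself in $\virta_n'$, and to reduce the remaining part of $\virta_n'$ to a single monotonicity statement that follows directly from MHR. Specifically, I would write
\[
\virta_n(z) = F(z)^{n-1}\bigl(z - T(z)\bigr), \qquad T(z) := \frac{1-F(z)^n}{n\,f(z)\,F(z)^{n-1}},
\]
so that $\virta_n(z)\ge 0$ is exactly the condition $z \ge T(z)$. Differentiating,
\[
\virta_n'(z) = (n-1)\,F(z)^{n-2}f(z)\,\bigl(z-T(z)\bigr) + F(z)^{n-1}\bigl(1 - T'(z)\bigr)
= \frac{(n-1)f(z)}{F(z)}\,\virta_n(z) + F(z)^{n-1}\bigl(1 - T'(z)\bigr).
\]
On the region where $\virta_n(z)\ge 0$, the first term is already non-negative (using $f,F>0$ from the non-zero density assumption). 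So the whole argument reduces to showing $T'(z)\le 1$; in fact I will prove the stronger statement that $T$ is non-increasing, i.e.\ $T'(z)\le 0$, under MHR. This is the step I expect to be the crux.

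For that step, I would use $1-F^n=(1-F)(1+F+\cdots+F^{n-1})$ to rewrite
\[
T(z) = \frac{1-F(z)}{f(z)}\cdot\frac{1+F(z)+\cdots+F(z)^{n-1}}{n\,F(z)^{n-1}}
= \frac{1}{h(z)}\cdot\frac{1}{n}\sum_{k=0}^{n-1} F(z)^{-k}.
\]
Under MHR, $h$ is non-decreasing, so $1/h$ is non-increasing and positive. And since $F$ is non-decreasing, each $F(z)^{-k}$ is non-increasing and positive, hence so is their sum. The product of two non-negative non-increasing functions is non-increasing (pointwise expansion $a(x)b(x)-a(y)b(y)=a(x)(b(x)-b(y))+b(y)(a(x)-a(y))\ge 0$ for $x\le y$), so $T$ is non-increasing and $T'(z)\le 0$.

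Plugging $T'\le 0$ and $\virta_n(z)\ge 0$ back into the derivative formula, both summands are non-negative, so $\virta_n'(z)\ge 0$ throughout the region where $\virta_n\ge 0$, proving the lemma. The only place real work happens is the algebraic identity exhibiting $T$ as $(1/(nh))\sum_k F^{-k}$; once one sees this decomposition the monotonicity is immediate, and the rest of the argument is just the product-rule observation that $\virta_n = F^{n-1}(z-T)$ makes the $\virta_n\ge 0$ hypothesis appear directly as a non-negative summand of $\virta_n'$.
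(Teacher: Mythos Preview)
Your proof is correct and follows essentially the same approach as the paper: both factor $\virta_n(z)=F(z)^{n-1}\bigl(z-T(z)\bigr)$ with $T(z)=\tfrac{1}{n h(z)}\sum_{j=0}^{n-1}F(z)^{-j}$, use MHR together with monotonicity of $F$ to conclude that $T$ is non-increasing, and then observe that the product $F^{n-1}\cdot(z-T)$ is non-decreasing on the region where it is non-negative. Your presentation via the explicit derivative formula $\virta_n'=\tfrac{(n-1)f}{F}\virta_n+F^{n-1}(1-T')$ makes the role of the hypothesis $\virta_n\ge 0$ slightly more transparent than the paper's ``product of a positive increasing function with a non-decreasing function'' phrasing, but the underlying argument is the same.
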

\begin{proof}
We can rewrite the virtual value function in terms of the hazard rate
$h(z)$ of the distribution as follows.
\begin{align*}
\virta_n(z) &= z\dist(z)^{n-1} - \frac{1}{nh(z)}\sum_{j=0}^{n-1}\dist(z)^j\\
&= \dist(z)^{n-1}\left\{ z - \frac{1}{nh(z)}\sum_{j=0}^{n-1}\dist(z)^{-j} \right\}
\end{align*}
The function $h(z)$ is a non-negative non-decreasing
function. Therefore, $(-1/nh(z))$ is a negative non-decreasing
function. On the other hand, $\sum_{j=0}^{n-1}\dist(z)^{-j}$ is a
decreasing function of $z$. The product of a negative non-decreasing
function and a decreasing function is a non-decreasing
function. Therefore, the term within brackets is a non-decreasing
function of $z$. The term outside brackets, $\dist(z)^{n-1}$, is also
an always positive increasing function. Therefore, the product of the
two terms is an increasing function over any interval where it is
positive.
\end{proof}

\noindent
We obtain the following corollary.
\begin{corollary}
Let $\dist$ be a distribution that satisfies MHR. Then the optimal
all-pay auction for values distributed independently according to
$\dist$ is a highest-bid-wins auction with a reserve price.
\end{corollary}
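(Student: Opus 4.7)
The plan is to combine Lemma~\ref{lem:mp=vv} (which reduces maximizing expected maximum payment to maximizing expected virtual surplus with respect to $\virta_n$) with Lemma~\ref{lem:MHR} (which gives monotonicity of $\virta_n$ on the region where it is non-negative). The main subtlety, compared with Theorem~\ref{thm:opt-reg}, is that MHR does not guarantee that $\virta_n$ is monotone on all of $V$, only on the part where $\virta_n \geq 0$; so I cannot directly invoke Theorem~\ref{thm:opt-reg}. Instead, I need to argue that this weaker ``monotonicity on the positive region'' is already enough to make the obvious virtual-surplus-maximizing rule implementable.

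First, I would invoke Lemma~\ref{lem:mp=vv} to express $\MP{\A} = \expect{\sum_i \alloci(\vals)\virta_n(\vali)}$ for any symmetric all-pay auction. This means that the maximum payment is maximized by the allocation rule that awards the full reward to the agent with the largest value of $\virta_n(\vali)$, provided that value is non-negative, and gives no reward otherwise; any such rule is an upper bound on what any feasible (monotone) allocation can achieve. The question is whether this pointwise maximizer is itself implementable in BNE.

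Next, I would use Lemma~\ref{lem:MHR}: on the interval $I = \{z : \virta_n(z) \geq 0\}$, the function $\virta_n$ is non-decreasing. Let $v^* = \inf I$; then the rule ``allocate to the agent with the highest value among those with $\vali \geq v^*$, and allocate to no one otherwise'' coincides, for $\iid$ values, with the pointwise virtual-surplus maximizer (agents below $v^*$ have non-positive $\virta_n$, so excluding them is optimal; among those above $v^*$, monotonicity of $\virta_n$ on $I$ means the highest value is also the highest virtual value). The interim allocation rule induced by this mechanism is clearly monotone non-decreasing in $\vali$, so by Theorem~\ref{t:BNE-char} it admits a BNE implementation.

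Finally, I would observe that this allocation rule is precisely a highest-bid-wins reserve-price all-pay auction with value reserve $v^*$; Lemma~\ref{l:all-pay-reserves} gives the corresponding reserve bid $v^*\dist(v^*)^{n-1}$. Since this mechanism attains the pointwise-optimal virtual surplus, it is optimal among all symmetric all-pay auctions, proving the corollary. The main obstacle, as noted, is the gap between MHR and full $n$-regularity of $\virta_n$; the key observation that resolves it is that the pointwise maximizer of $\virta_n$ only needs to be monotone on the region where $\virta_n \geq 0$, which is exactly what Lemma~\ref{lem:MHR} delivers.
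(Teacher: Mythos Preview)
Your proposal is correct and follows essentially the same approach that the paper intends: the paper states the corollary without proof, as an immediate consequence of Lemma~\ref{lem:MHR} together with the virtual-surplus argument behind Theorem~\ref{thm:opt-reg}, and you have simply spelled out that reasoning explicitly. Your discussion of the subtlety---that MHR only yields monotonicity of $\virta_n$ on the region where it is non-negative, which is nonetheless enough because values with negative $\virta_n$ are excluded anyway---is exactly the point the paper is relying on implicitly (and the factorization in the proof of Lemma~\ref{lem:MHR} guarantees that this non-negative region is an upper interval, so your $v^*$ is well defined).
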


\paragraph{Irregular distributions and ironing.}
For distributions that are not regular according to the definition
above, we can apply an ironing procedure from
Theorem~\ref{t:irregular-ivv-optimal} to $\virta_n$ to obtain an
ironed virtual value function $\ivirta_n$.  This function is monotone
non-decreasing and by Theorem~\ref{t:irregular-ivv-optimal} the BNE
that optimizes it point-wise optimizes the maximum payment objective.

The optimal mechanism in this case allocates the entire reward to the
agent with the maximum ironed virtual value, in the case of ties
distributing the reward equally among the tied agents\footnote{An
  equivalent way of resolving ties in the maximum ironed virtual value
  is to allocate the reward to a random tied agent.}. Since the ironed
virtual value function is a weakly increasing function, the induced
bid function is constant in the intervals where the ironed virtual
value is constant, and discontinuous at the ends of those
intervals. In effect, this creates intervals of bids that are
suboptimal to make at any value; call these bid intervals
``forbidden''. In order to implement the mechanism as an all-pay
auction, we identify the forbidden bid intervals; then we round every bid
in a forbidden bid interval down to the closest ``allowed'' bid, and
distribute the reward equally among the highest bidders (subject to an
appropriate reserve price defined by $(\ivirta_n)^{-1}(0)$). We
therefore get the following theorem:
\begin{theorem}
For any setting with $\iid$ values, the optimal all-pay auction is
defined by a reserve price and a subset of bids called forbidden bid intervals,
that has the following format: the auction solicits bids and rounds
them down to the nearest non-forbidden bids; it then distributes the
reward equally among the highest bidders subject to the bids being
above the reserve price.
\end{theorem}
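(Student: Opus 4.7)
My plan is to combine Lemma~\ref{lem:mp=vv} with the ironing machinery of Theorem~\ref{t:irregular-ivv-optimal} and then work out what the resulting allocation rule looks like, both in value-space and in bid-space. Concretely, Lemma~\ref{lem:mp=vv} reduces maximizing $\MP{\A}$ over symmetric all-pay auctions to maximizing the ``virtual surplus for maximum payment'' $\expect{\sum_i \alloci(\vals)\virta_n(\vali)}$ subject to the allocation being induced by a symmetric BNE (equivalently, being symmetric and having each agent's interim allocation monotone non-decreasing). Theorem~\ref{t:irregular-ivv-optimal} tells me that the optimal such allocation is obtained by point-wise maximizing $\ivirta_n$ (with ties broken uniformly at random), because on ironing intervals the expected virtual value equals the expected ironed virtual value and elsewhere they are equal point-wise.

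\textbf{Step 1: Describe the optimal allocation in value-space.} Since $\ivirta_n$ is weakly increasing, the point-wise maximizer allocates the entire reward (or none of it, if all values have negative $\ivirta_n$) to the agent(s) attaining the maximum $\ivirta_n(\vali)$. If $\ivirta_n$ is strictly increasing at $\vali$, the agent with the highest value wins outright; if $\vali$ lies in an ironing interval $[a,b]$ where $\ivirta_n$ is constant, then the reward is split equally among all agents whose values also lie in $[a,b]$ (assuming no agent has value above $b$). The reserve value is $(\ivirta_n)^{-1}(0)$, the smallest value at which $\ivirta_n$ is non-negative. The key observation, which I will record as a short calculation, is that on an ironing interval $[a,b]$ the interim allocation $\alloci(\vali)$ is \emph{constant} in $\vali$, because summing $\frac{1}{k+1}\binom{n-1}{k}[\dist(b)-\dist(a)]^{k}\dist(a)^{n-1-k}$ over $k$ does not depend on $\vali$.

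\textbf{Step 2: Translate to bids.} By Theorem~\ref{t:BNE-char} the induced symmetric bid function is $\bid(\vali)=\vali\alloci(\vali)-\int_0^{\vali}\alloci(z)\,dz$. Differentiating, $\bid'(\vali) = \vali\alloci'(\vali)$, so $\bid$ is constant on every ironing interval and strictly increasing elsewhere (outside the reserve). At the right endpoint $b$ of an ironing interval the interim allocation jumps upward, so $\bid$ jumps too; the open bid interval $(\bid(b^-),\bid(b^+))$ is never used in equilibrium. I will call exactly these intervals the \emph{forbidden bid intervals}, and the minimum accepted bid the reserve bid, computed by evaluating $\bid$ at $(\ivirta_n)^{-1}(0)$ as in Lemma~\ref{l:all-pay-reserves} (extended to the ironed setting).

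\textbf{Step 3: Verify the implementation and identify the main obstacle.} I then define the all-pay auction as in the theorem statement---solicit bids, reject anything below the reserve, round any bid lying in a forbidden interval down to the nearest allowed bid, and split the reward equally among all highest (post-rounding) bidders---and check that the symmetric strategy $\val\mapsto \bid(\val)$ described above is a BNE whose allocation rule coincides with the optimal allocation from Step~1. The main obstacle, I expect, is the verification that this is genuinely an equilibrium: I need to argue that no agent benefits by deviating into a forbidden bid interval (she is rounded down, so she gets no extra allocation at extra cost, making it strictly dominated), nor by bidding at the top of a forbidden interval rather than at the bottom of the next allowed band (this is ruled out by comparing the jump in interim allocation with the jump in $\bid$, which are precisely matched by Myerson's payment identity applied to the ironed allocation). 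Once these local deviations are excluded, global optimality of $\bid(\cdot)$ follows from the standard monotonicity-plus-envelope characterization in Theorem~\ref{t:BNE-char}, and Lemma~\ref{lem:mp=vv} together with Theorem~\ref{t:irregular-ivv-optimal} ensures that the resulting $\MP{\A}$ is optimal over all symmetric all-pay auctions.
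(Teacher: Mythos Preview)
Your proposal follows essentially the same route as the paper: invoke Lemma~\ref{lem:mp=vv} to reduce the problem to maximizing virtual surplus for maximum payment, apply the ironing of Theorem~\ref{t:irregular-ivv-optimal} to obtain the optimal allocation in value-space (give the reward to the agent(s) with the highest non-negative ironed virtual value, splitting ties equally), and then translate to bid-space by reading off where the induced bid function is flat and where it jumps. The paper's own argument is little more than this sketch, so your plan is on target, and your Step~3 adds an equilibrium verification that the paper leaves implicit.

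One correction to Step~2: the interim allocation, and hence the bid function, jumps at \emph{both} endpoints of each ironing interval $[a,b]$, not only at the right endpoint. For $\vali\in[a,b]$ a short calculation gives
\[
\alloci(\vali)=\sum_{k=0}^{n-1}\frac{1}{k+1}\binom{n-1}{k}\bigl(\dist(b)-\dist(a)\bigr)^k\dist(a)^{n-1-k}
=\frac{\dist(b)^n-\dist(a)^n}{n\bigl(\dist(b)-\dist(a)\bigr)},
\]
which is the average of $t^{n-1}$ over $t\in[\dist(a),\dist(b)]$; this strictly exceeds $\dist(a)^{n-1}=\alloci(a^-)$ and is strictly less than $\dist(b)^{n-1}=\alloci(b^+)$. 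Thus each ironing interval generates \emph{two} forbidden bid intervals, one on each side of the common equilibrium bid $\bid(a^+)=\bid(b^-)$, exactly as in the paper's worked example where a single ironed value-interval produces the two forbidden bid intervals $[1.10,1.199)$ and $(1.199,1.31]$. Your description as written would miss half of them; once this is fixed the argument goes through.
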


\paragraph{An example of ironing.} We now present a simple example of
a distribution that is irregular $\wrt$ maximum payment, and derive
its ironed virtual value and as well as forbidden bid intervals. There are two
agents, each with a value drawn independently from $U[1,2]$ with
probability $3/4$ and from $U[2,3]$ with probability $1/4$.
  Figure~\ref{fig:ironing} below shows the virtual value function
  $\virta_2$ and its integral with respect to $q=\dist(\val)$ using
  thick grey lines; their ironed counterparts are shown in thin red
  lines. The integral of the virtual value function as a function of
  $q$ is given by the expression $\frac 12\dist^{-1}(q)(1-q^2)$. We
  iron this function by taking its convex envelope; $\ivirta_2$ is
  then the derivative with respect to $q$ of that convex envelope.

The ironed virtual value is constant in the interval $[1.918,
  2.167]$. The probability of allocation (not plotted), and therefore
the bid function, are also constant over this interval. The
corresponding bid function is plotted with a thin black line below;
there are two forbidden bid intervals, namely $[1.10, 1.199)$ and
  $(1.199, 1.31]$, with the intermediate value of $1.199$ being
allowed. The two forbidden bid intervals correspond to the two
discontinuities in the probability of allocation at the end points of
the ironed interval.

\begin{figure}
\centering
\includegraphics[scale=0.15]{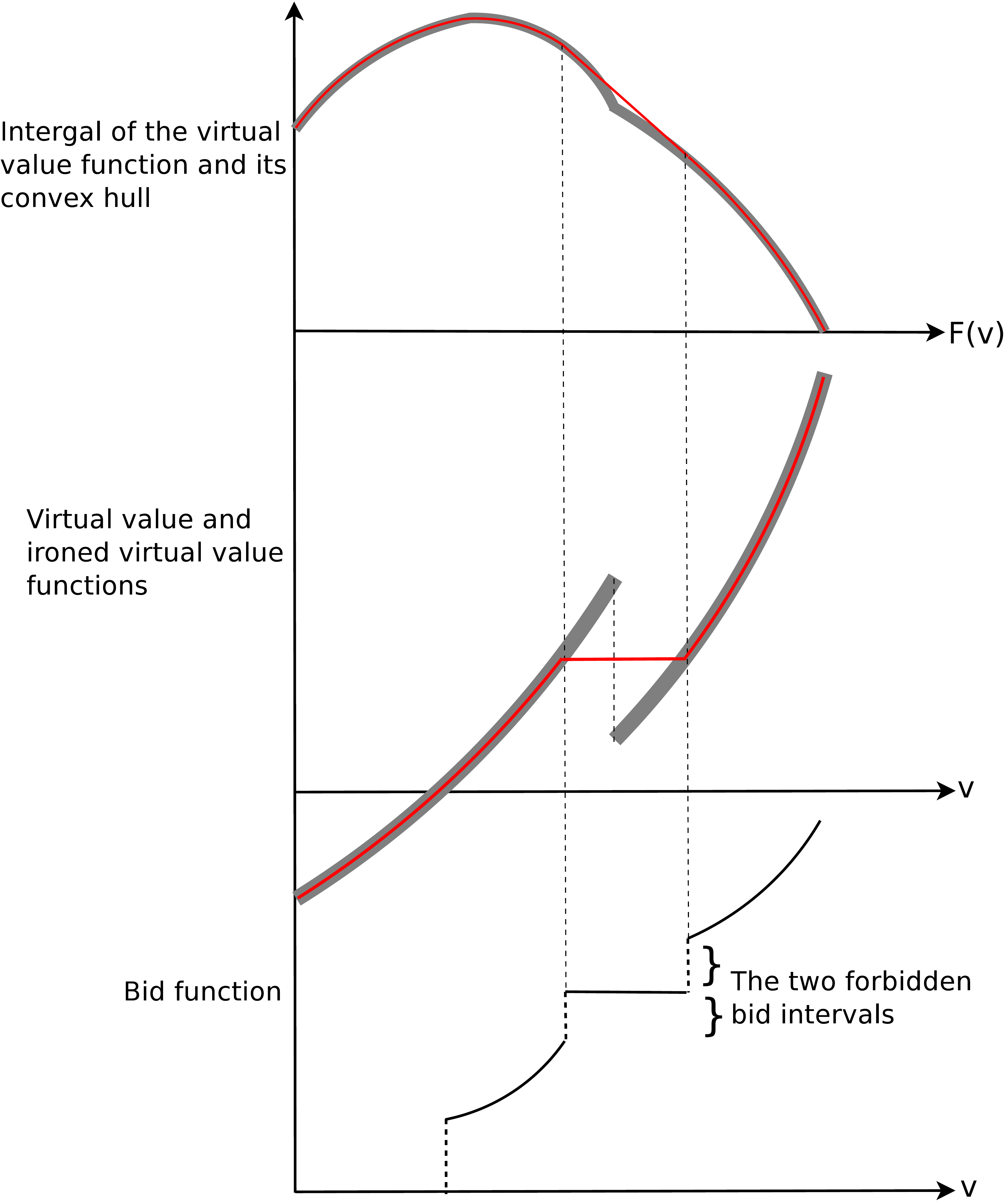}
\caption{The Ironing Procedure}
\label{fig:ironing}
\end{figure}

\paragraph{Irregularity as a function of $n$.}
An interesting point to note is that irregularity increases with
$n$. Specifically, the intervals of values that require ironing under
$\virta_n$ increase with $n$.\footnote{This happens because the
  intervals requiring ironing are precisely those where the integral
  of the virtual value function is non-concave; Increasing $n$ amounts
  to multiplying the integral with a convex function resulting in
  non-concave intervals continuing to stay non-concave.} This does not
necessarily imply that as $n$ increases a larger and larger number of
agents are tied for the reward, for two reasons: (1) reserve value
(not the reserve bid) could increase with $n$, and (2), due to the form of
the virtual value function, ironing is typically necessary at low
values rather than at high values.

\paragraph{Asymmetric contests.}
We remark that even for symmetric instances (i.e.~\iid~values)
asymmetric all-pay auctions can be more powerful than symmetric
all-pay auctions. We now present an example that exhibits this. 
Consider two contestants with values drawn \iid~ from $[0,1]$
according to the distribution $\dist(x) =
x^{1.5}$. The optimal symmetric auction studied in this paper sets a
reserve value of $(0.25)^{1/3} = 0.63$ (which translates to a
reserve bid of $(0.25)^{5/6} = 0.315$) and serves the highest
bidder who exceeds this reserve bid.  This gives an expected maximum
payment of $0.396$. 

We now define a better auction that favors contestant 1 over
contestant 2. The rules of the contest in value space are as
follows. When contestant 1's value is more than $0.75$ we serve him
irrespective of contestant 2's value, otherwise we serve the
contestant with the higher value subject to a reserve value of
$0.63$. This allocation rule creates a discontinuous increase in the
expected allocation probability of player 1 at $0.75$, and hence a
discontinuity in his bid function at $0.75$. In bid space, this
corresponds to the following contest:
\begin{enumerate}
\item We set a reserve bid of $0.315$ as before.
\item All bids of contestant $1$ in the range $[0.418, 0.681)$ get rounded down
to $0.418$.
\item When contestant 1 bids at least $0.681$ he wins irrespective of 2's 
bid.
\item Otherwise, the highest bidder wins with ties broken in favor of
  contestant 2.
\end{enumerate}

By guaranteeing victory for contestant 1 beyond a certain bid, the
above auction encourages contestant 1 to bid higher, thus boosting
maximum payment. Since the objective is maximum payment, this type of
bias is useful: the asymmetric auction obtains a smaller revenue but a
larger expected maximum payment of $0.397$. We remark that in a
real-world setting with a priori identical agents, favoring one agent
over another may be socially unacceptable.

\section{Prior-independent approximation}

As we show above, optimal crowdsourcing contests depend on knowing the
agents' value distribution. To what extent is it important to know the
distribution? In particular, under what conditions does the simple
highest-bidder-wins contest without any reserve bid approximate the
optimal one? We now show that for distributions that are regular $\wrt$
revenue the simple highest-bidder-wins contest obtains an approximation ratio
of $2n/(n-1)$, thus limiting the power of distributional knowledge.

For the standard goal of maximizing expected revenue, Bulow and Klemperer showed that for $\iid$ value
distributions that are regular $\wrt$ revenue, it is better to run a
Vickrey auction with no reserve price on $n+1$ agents than to run an
optimal auction on only $n$ agents. That is, the ability to recruit an
extra agent in the auction is more profitable to the auctioneer than
knowing the distribution.

We first note that Bulow and Klemperer's result implies that for
distributions that are regular $\wrt$ revenue, the highest-value-wins
auction with no reserve price on $n$ agents is within a factor of
$(1-1/n)$ of the optimal mechanism in terms of revenue. This combined
with Theorem~\ref{thm:util-ratio} gives us the following theorem.

\begin{theorem}
For $\iid$ distributions that are regular $\wrt$ revenue, the
highest-bid-wins all-pay auction without a reserve bid obtains an
approximation ratio of $2n/(n-1)$.
\end{theorem}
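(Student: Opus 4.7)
The plan is a three-line chain: Theorem~\ref{thm:util-ratio} loses a factor of $2$ passing from an auction's total revenue to its expected maximum payment; revenue equivalence identifies the revenue of $\A$ (the no-reserve-bid all-pay auction of the theorem) with that of a second-price auction on $n$ \iid\ agents; and the Bulow--Klemperer corollary cited just above the theorem loses at most a factor of $n/(n-1)$ between the Vickrey revenue and the optimal revenue. Multiplying these losses gives $\tfrac{2n}{n-1}$.

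In more detail: (i) Since $\A$ is a highest-bidder-wins reserve-price all-pay auction with the trivial reserve equal to zero, Theorem~\ref{thm:util-ratio} gives $\MP{\A}\ge\tfrac{1}{2}\Rev{\A}$. (ii) The allocation rule induced by $\A$ in BNE --- serve the highest value --- coincides with that of the no-reserve Vickrey auction on the same $n$ agents; revenue equivalence (a consequence of Theorem~\ref{t:BNE-char}) then yields $\Rev{\A}=\Rev{V}$, where $V$ denotes that Vickrey auction. (iii) The Bulow--Klemperer corollary cited in the paragraph preceding the theorem states that for regular \iid\ distributions, $\Rev{V}\ge(1-1/n)\Rev{\OPT}$. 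Chaining the three inequalities,
\[\MP{\A}\;\ge\;\tfrac{1}{2}\Rev{V}\;\ge\;\tfrac{n-1}{2n}\Rev{\OPT},\]
which gives an approximation ratio $\Rev{\OPT}/\MP{\A}\le \tfrac{2n}{n-1}$.

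The main obstacle is step (iii), since the BK corollary is asserted in the excerpt without proof. The standard route to it combines Bulow and Klemperer's own theorem $\Rev{\OPT}\le\Rev{V'}$ (where $V'$ is the Vickrey auction on $n+1$ \iid\ agents, beating the optimum on $n$) with an exchangeability argument controlling $\Rev{V'}/\Rev{V}$: drop a uniformly random one of the $n+1$ agents and run Vickrey on the remaining $n$; by i.i.d.\ symmetry the expected revenue is $\Rev{V}$, and a direct computation bounds the expected loss coming from the cases where the dropped agent was the top or second-top bidder. Since the excerpt already asserts the corollary, I would cite it rather than reprove it, making the rest of the argument pure bookkeeping.
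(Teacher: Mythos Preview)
Your proposal is correct and follows essentially the same argument as the paper: the paper's proof is precisely the chain (utilization ratio from Theorem~\ref{thm:util-ratio}) $\times$ (revenue equivalence with Vickrey) $\times$ (the Bulow--Klemperer $(1-1/n)$ corollary), stated in the paragraph preceding the theorem. Your additional sketch of how to derive the $(1-1/n)$ factor from the original Bulow--Klemperer statement goes slightly beyond what the paper provides, but the core route is identical.
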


We remark that for the highest-value-wins auction without reserve
prices, the revenue converges to the optimal as more and more agents
are added. However for all-pay auctions adding more and more agents
does not improve the approximation ratio beyond $2$.

\bibliographystyle{alpha}
\bibliography{ap}

\appendix
\section{Proof of Theorem~\ref{thm:statOpt}}\label{sec:app}

\begin{proofof}{Theorem~\ref{thm:statOpt}} We begin by noting that
the class of static auctions is symmetric, i.e., a permutation of bids results in
the same permutation of the allocation and payments. Since agents'
private values are identically distributed, any such symmetric
allocation rule induces a symmetric equilibrium in which all agents
use an identical bidding function. This in turn implies that the
allocation as a function of agents' values is also symmetric across
agents.

Let the agent values be distributed independently according to distribution function $\dist$, 
with density function $\dens$.  Consider the static allocation rule $\A =(\stat_1, \dots, \stat_k, 0,
\dots, 0)$, i.e, the agent with the $i$-th highest bid gets $\stat_i$ fraction
of the reward if $i \leq k$, an 0 otherwise.  We have $\sum_{i=1}^{k} \stat_i =
1$. We focus on the symmetric bid-function $b(\cdot)$ induced by this
allocation rule. 

In a truthful auction with allocation rule $\A$, the expected payment made by
the $r$-th highest bidder is $p_r(z) = \sum_{j=r+1}^{k+1} v_{jr}(z)(\stat_{j-1}
- \stat_j)$, where $v_{jr}(z)$ is the expectation of the $j$-th highest bid
(=value) given the $r$-th highest bid is $z$. 

Let $g(j,n,z)$ denote the expectation of the $j$-th highest draw among $n$
draws from $F$, given that the maximum draw is at most $z$. Then we have
$v_{jr}(z) = g(j-r, n-r, z)$. 

The contribution of bidder $i$ to the maximum payment objective is
\begin{align*} 
\MPi{\A} &= \int_{\vali}\bid(\vali)\prob[\valsmi]{\vali=\valith[1]}\dens(\vali)\,d\vali\\ 
&= \int_{\vali} b(\vali)\dist(\vali)^{n-1}\densi(\vali)\, d\vali 
\end{align*}
Since agents values are drawn $\iid$ from $\dist$, we have $\MP{\A} =
n\MPi{\A}$. 

Because the bid functions are symmetric, by the revenue equivalence principle, $b(z)$
equals the expected payment made by an agent with value $z$ in a truthful auction with the 
same allocation rule. So,
\begin{align*}
b(z) &= \sum_{r=1}^{k}\prob[\valsmi]{z= \valith[r]}\cdot p_r(z)\\
&= \sum_{r=1}^{k}\binom{n-1}{r-1}(1-F(z))^{r-1}F(z)^{n-r}\cdot\left\{\sum_{j=1}^{k+1-r}g(j,n-r,z)(\stat_{j+r-1} - \stat_{j+r})\right\}
\end{align*}

We prove the theorem by showing that $\frac{d\MPi{\A}}{d\stat_k}$ is negative. When we change $\stat_k$
we assume that all the mass is transferred to (or drawn from) $\stat_1$. This will prove that
the optimal allocation rule is to put all the mass on $\stat_1$, i.e., $\stat_1 = 1$. 

Using the formula for $b(z)$,
it is easy to observe that for $r=2$ to $r=k-1$, terms corresponding to that specific $r$ 
in $\frac{d\MPi{\A}}{d\stat_k}$ will be an integral with an integrand
of
$$\binom{n-1}{r-1}(1-F(z))^{r-1}F(z)^{2n-r-1} \cdot\left\{-g(k-r,n-r,z) + g(k-r+1, n-r, z) \right\}$$ 
This integrand is negative because $g$ is a decreasing function in its first argument. 

The term corresponding to $r=1$ in $\frac{d\MPi{\A}}{d\stat_k}$ will be an integral with an integrand of
$$F(z)^{2n-2} \cdot\left\{-g(1,n-1,z)- g(k-1,n-1,z) + g(k,n-1,z)\right\} $$
Note that the above integrand is negative even if $g(1,n-1,z)$ term were not there. 

The term corresponding to $r=k$ in $\frac{d\MPi{\A}}{d\stat_k}$ will be an integral with a positive integrand of 
$$ \binom{n-1}{k-1}(1-F(z))^{k-1}F(z)^{2n-k-1}\cdot\left\{ g(1,n-k,z) \right\}$$

Our proof is going to upper bound  $\frac{d\MPi{\A}}{d\stat_k}$ 
by ignoring certain negative terms in it, and show that even the upper bound is negative.
In particular, we only consider terms corresponding to $r=k-1$, $r=k$ and one term of 
$r=1$, namely $F(z)^{2n-2} \cdot\left\{-g(1,n-1,z)\right\}$. Let this upper bound be
denoted by $Q$. 
\begin{align*}
\frac{d\MPi{\A}}{d\stat_k} \leq Q &= - \int_{z} F(z)^{2n-2} g(1,n-1,z)\,d\dist(z)\\
&- \binom{n-1}{k-2} \int_z(1-F(z))^{k-2} F(z)^{2n-k} g(1,n-k+1,z)\,d\dist(z)\\
&+ \binom{n-1}{k-2} \int_z(1-F(z))^{k-2} F(z)^{2n-k} g(2,n-k+1,z)\,d\dist(z)\\
&+ \binom{n-1}{k-1} \int_z(1-F(z))^{k-1} F(z)^{2n-k-1} g(1,n-k,z)\,d\dist(z)
\end{align*}
We derive the expressions for $g(1,n,z)$ and $g(2,n,z)$ below. 
\begin{align*}
g(1,n,z) &= n\int_0^{z} y\frac{f(y)}{F(z)}\left(\frac{F(y)}{F(z)}\right)^{n-1}\,dy\\
&= z - \frac{\int_0^z F(t)^n\,dt}{F(z)^n}
\end{align*}
\vspace{-1cm}
\begin{align*}
g(2,n,z) &= n(n-1)\int_0^{z} y\frac{f(y)}{F(z)}\left(1 - \frac{F(y)}{F(z)}\right)\left(\frac{F(y)}{F(z)}\right)^{n-2}\,dy\\
&= z - \left[n\frac{\int_0^z F(t)^{n-1}\,dt}{F(z)^{n-1}} - (n-1)\frac{\int_0^z F(t)^n\,dt}{F(z)^n}\right]
\end{align*}
We susbtitute the expression for $g$ into $Q$. 
\begin{align*}
Q &= -\int_z F(z)^{n-2}\left[zF(z)^n - \int_zF(t)^n \,dt\right]\,d\dist(z)\\
&+ \binom{n-1}{k-1}\int_z (1-\dist(z))^{k-1}F(z)^{2n-k-1}z\,d \dist(z)\\
& + \binom{n-1}{k-2}\int_z (1-\dist(z))^{k-2}F(z)^{n-1} \left(\int_0^z F(t)^{n-k+1}\, dt\right)\, d\dist(z)\\
& + \binom{n-1}{k-2}(n-k)\int_z (1-\dist(z))^{k-2}F(z)^{n-1} \left(\int_0^z F(t)^{n-k+1}\, dt\right)\, d\dist(z)\\
& - \binom{n-1}{k-2}(n-k+1)\int_z (1-\dist(z))^{k-2}F(z)^{n} \left(\int_0^z F(t)^{n-k}\, dt\right)\, d\dist(z)\\
& - \binom{n-1}{k-1}\int_z (1-\dist(z))^{k-1}F(z)^{n-1} \left(\int_0^z F(t)^{n-k}\, dt\right)\, d\dist(z)
\end{align*}
We now factor the term $(1-\dist(z))^{k-1}$ as $(1-\dist(z))^{k-2}\cdot(1-\dist(z))$ and then group terms. We get
\begin{align*}
Q &= -\int_z F(z)^{n-2}\left[zF(z)^n - \int_zF(t)^n \,dt\right]\,d\dist(z)\\
&- \binom{n-1}{k-1}\int_z (1-\dist(z))^{k-2}F(z)^{2n-k}z\,d \dist(z)\\
&+ \binom{n-1}{k-1}\int_z (1-\dist(z))^{k-2}F(z)^{2n-k-1}z\,d \dist(z)\\
&+  \binom{n-1}{k-2}(n-k+1)\int_z (1-\dist(z))^{k-2}F(z)^{n-1} \left(\int_0^z F(t)^{n-k+1}\, dt\right)\, d\dist(z)\\
&-  \binom{n-1}{k-2}\left[(n-k+1) - \frac{n-k+1}{k-1}\right] \int_z (1-\dist(z))^{k-2}F(z)^{n} \left(\int_0^z F(t)^{n-k}\, dt\right)\, d\dist(z)\\
&-  \binom{n-1}{k-1}\int_z (1-\dist(z))^{k-2}F(z)^{n-1} \left(\int_0^z F(t)^{n-k}\, dt\right)\, d\dist(z)
\end{align*}
We have to prove that $Q\leq 0$. This is equivalent to proving that
\begin{align*}
&\int_z F(z)^{n-2}\left[zF(z)^n - \int_0^zF(t)^n \,dt\right]\,d\dist(z)\\
&+ \binom{n-1}{k-1}\int_z (1-\dist(z))^{k-2}F(z)^{n-1}\left[zF(z)^{n-k+1} -\int_0^z F(t)^{n-k+1}\,dt \right]\,d\dist(z)\\
&- \binom{n-1}{k-1}\int_z (1-\dist(z))^{k-2}F(z)^{n-1}\left[zF(z)^{n-k} -\int_0^z F(t)^{n-k}\,dt \right]\,d\dist(z)\\
&\geq\\
& \binom{n-1}{k-1}(k-2)\int_z (1-\dist(z))^{k-2}F(z)^{n-1}\left(\int_0^z F(t)^{n-k+1}\,dt \right)\,d\dist(z)\\
&- \binom{n-1}{k-1}(k-2)\int_z (1-\dist(z))^{k-2}F(z)^{n}\left(\int_0^z F(t)^{n-k}\,dt \right)\,d\dist(z)
\end{align*}

The RHS can be seen to be negative. Thus it is enough to prove that the LHS is positive. Rewriting the terms
in the square bracket via integration by parts, 
\begin{align*}
&n\int_z \dist(z)^{n-2}\left(\int_0^z tF(t)^{n-1}\,d \dist(t)\right)\, d\dist(z)\\
&+ \binom{n-1}{k-1}(n-k+1) \int_z (1-\dist(z))^{k-2}\dist(z)^{n-1}\left(\int_0^z tF(t)^{n-k}\,d \dist(t)\right)\, d\dist(z)\\
&- \binom{n-1}{k-1}(n-k) \int_z (1-\dist(z))^{k-2}F(z)^{n-1}\left(\int_0^z tF(t)^{n-k-1}\,d \dist(t)\right)\, d\dist(z)
\end{align*}
Changing the order of integration, we have the LHS as,
\begin{equation*}
\int_{t=0}^{\infty}t \dist(t)^{n-k-1}\dens(t)\left\{
\begin{split}
\binom{n-1}{k-1}(n-k+1)\left(\int_{\dist(t)}^{1}(1-\dist(z))^{k-2}\dist(z)^{n-1}\,d\dist(z)\right)
\left[\dist(t) - \frac{n-k}{n-k+1}\right]\qquad \\
\qquad + n\left(\int_{\dist(t)}^1 \dist(z)^{n-2}\,d \dist(z)\right)\dist(t)^k
\end{split}
\right\}\, dt
\end{equation*}
Applying integration by parts again, (this time taking $t$ as one term and the rest as the 
differential part) we get the LHS as,
\begin{equation*}
\int_{t=0}^{\infty} \left( \int_{\dist(t)}^1 u^{n-k-1}\left\{
\begin{split}
\binom{n-1}{k-1}(n-k+1)\left(\int_{u}^{1}(1-\dist(z))^{k-2}\dist(z)^{n-1}\,d\dist(z)\right)
\left[u - \frac{n-k}{n-k+1}\right]\qquad \\
\qquad + n\left(\int_{u}^1 \dist(z)^{n-2}\,d \dist(z)\right)u^k
\end{split}
\right\}\,du\right)\, dt
\end{equation*}
Rewrite the above integral as 
$\int_{t=0}^{\infty} H_n(\dist(t))\,dt$
where 
$$ H_n(x) = \int_{x}^1 u^{n-k-1}\left\{
\binom{n-1}{k-1}(n-k+1)\left(\int_{u}^{1}(1-v)^{k-2}v^{n-1}\,dv\right)
\left[u - \frac{n-k}{n-k+1}\right] + n\left(\int_{u}^1 v^{n-2}\,dv\right)u^k\right\}\,du
$$
If we prove that $H_n(x)$ is always non-negative for $x \in [0,1]$ we are done. We have
$$-H_n'(x)= x^{n-k-1}\left\{
\binom{n-1}{k-1}(n-k+1)\left(\int_{x}^{1}(1-v)^{k-2}v^{n-1}\,dv\right)
\left[x - \frac{n-k}{n-k+1}\right] + n\left(\int_{x}^1 v^{n-2}\,dv\right)x^k\right\}
$$

Observe that $-H_n'(x)$ is negative for small values of $x$
and positive for large values of $x$ and never becomes negative after it has become
positive. Thus, $H_n(x)$ is first increasing and then decreasing. 
We know that $H_n(1) = 0$. If we prove that $H_n(0) \geq 0$, we would have proven that $H_n(x)$
is always non-negative. 
\begin{align*}
H_n(0) &= \binom{n-1}{k-1}(n-k+1)\int_{0}^1 u^{n-k-1}
\left(\int_{u}^{1}(1-v)^{k-2}v^{n-1}\,dv\right)
\left[u - \frac{n-k}{n-k+1}\right]\,du\\
&+ n\int_{0}^1\left(\int_{u}^1 v^{n-2}\,dv\right) u^{n-1}\,du\\
&=\binom{n-1}{k-1}(n-k+1)\int_{0}^1 (1-v)^{k-2}v^{n-1} 
\left(\int_{0}^{v}u^{n-k-1}\left[u - \frac{n-k}{n-k+1}\right]\,du\right)\,dv\\
&+ n\int_{0}^1 v^{n-2}\left(\int_{0}^v u^{n-1}\,du\right)\,dv\\
&=\binom{n-1}{k-1}\int_{0}^1 (1-v)^{k-2}v^{2n-k-1}(v-1)\,dv + \frac{1}{2n-1}\\
&=-2\binom{n-1}{k-1}\int_{0}^{\pi/2}cos^{4n-2k-1}(\theta)sin^{2k-1}(\theta)\,d\theta + \frac{1}{2n-1}
\end{align*}

The integral $\int_{0}^{\pi/2}cos^m(\theta)sin^{n}(\theta)\,d \theta = \frac{\Gamma(\frac{m+1}{2})\Gamma(\frac{n+1}{2})}{2\Gamma(\frac{m+n+2}{2})}$
Accordingly, we have 
\begin{align*}
H_n(0) &= - \binom{n-1}{k-1}\frac{\Gamma(2n-k)\Gamma(k)}{\Gamma(2n)} + \frac{1}{2n-1}\\
& > 0
\end{align*}
\end{proofof}

\end{document}